\documentclass[a4paper,11pt] {article}

\usepackage[english]{babel}
\usepackage{graphicx}
\usepackage{enumerate}
\usepackage{amsmath}
\usepackage{color, colortbl}

\newcommand{\vsd}{\vspace{+.2cm} }

\newtheorem{theorem}{Theorem}[section]
\newtheorem{defin}[theorem]{Definition}
\newtheorem{lemma}[theorem]{Lemma}

\newtheorem{definition}[theorem]{Definition}
\newtheorem{cor}[theorem]{Corollary}
\newtheorem{rqe}[theorem]{Remark}
\newtheorem{prop}[theorem]{Proposition}
\newtheorem{proposition}[theorem]{Proposition}

\newenvironment{proof}{\par\noindent {\bf Proof.} \rm}{\ ~~~$\fbox{}$}
\newenvironment{proof2}[1]{\par\noindent {\bf Proof of #1.} \rm}{\ ~~~$\fbox{}$}

\newcommand{\Z}{\mbox{\rm \lower0.3pt\hbox{$\angle\!\!\!$}Z}}

\newcommand{\sub}[2]{#1{[#2]}}

\newcommand{\card}[1]{{{\rm Card(}#1{\rm)}}}

\newcommand{\neww}[1]{\omega}
\newcommand{\newx}[1]{\chi}
\newcommand{\newy}[1]{\gamma}

\newcommand{\Modific}[1]{#1'}

\newcommand{\factors}[1]{{\rm Fcts} \left( #1 \right)}

\newcommand{\fromAtoB}{from $A^*$ to $B^*$}

\title{An overlap-free morphism is a $k$-power-free morphism\\
for any integer $k \geq 3$}
\author{Francis Wlazinski}
\date{\today}

\begin{document}

\maketitle

\abstract{It's all in the title.}

\parindent=0cm
\parskip=0.15cm

\newcounter{comptnivun}
\setcounter{comptnivun}{1}
\newcounter{comptnivdeux}
\setcounter{comptnivdeux}{1}
\newcounter{comptnivtrois}
\setcounter{comptnivtrois}{1}
\newcounter{comptnivquatre}
\setcounter{comptnivquatre}{1}

\bibliographystyle{plain}


\section{\label{sectionPreliminaries} 
         Introduction and preliminaries}

Let us recall some basic notions of Combinatorics of words we will use in this paper.

\subsection{\label{sectionWords}Words}

In the following,  $A$ and $B$ are \textit{alphabets}, that is, finite sets of symbols called \textit{letters}.
Since an alphabet with one element is of limited interest to us, 
we always assume that the cardinality of alphabets is at least two.

A \textit{word} is an element in $A^*$, the free monoid generated by $A$, whose identity element is the empty word,
denoted $\varepsilon$, and whose composition law, usually  unnoted, is simply the juxtaposition of symbols. 
We denote by $A^+$ the set of non-empty words, that is, $A^+=A^* \setminus \{\varepsilon\}$. 
We also speak of product of words, just as we speak of juxtaposition of words.  
We will not always specify the alphabet used, as this will often be irrelevant.

Given a non-empty
word $u = a_1\ldots a_n$, with $a_i \in A$ for every integer $i$ from 1 to $n$, the \textit{length}
of $u$ denoted by $|u|$ is the integer $n$,
that is, the number of letters of $u$.
By convention, we have $|\varepsilon| = 0$.
The \textit{mirror image} of $u$, denoted by $\tilde{u}$, is the word 
$a_{n} \ldots a_{2}a_{1}$. 

\vspace{+0.07cm}

A word $u$ is a \textit{factor} of a word $v$ if there exist
two (possibly empty) words $p$ and $s$ such that $v = p u s$.
We denote by ${\rm Fcts}(v)$ the set of all factors of $v$.
If $u\in{\rm Fcts}(v)$, we also say that $v$ \textit{contains} the word $u$ (as a factor).
If $p = \varepsilon$, $u$ is a \textit{prefix} of $v$.
If $s = \varepsilon$, $u$ is a \textit{suffix} of $v$.
If $u \neq v$, $u$ is a \textit{proper} factor of $v$.
If $u$, $p$, and $s$ are non-empty words, $u$ is an \textit{internal} factor of $v$.

\vspace{+0.07cm}

Two non-empty words $u$ and $v$ are \textit{conjugated} if $u=t_1t_2$ and $v=t_2t_1$ for two 
(possibly empty) words $t_1$ and $t_2$.
If $t_1 \neq \varepsilon$ and $t_2 \neq \varepsilon$, we say 
that $v$ is a \textit{proper} conjugated word of $u$.

\vspace{+0.07cm}

Let $w$ be a non-empty word and let $i, j$ be two integers such that
$0 \leq i-1 \leq j \leq |w|$.
We denote by $\sub{w}{i..j}$ the factor of $w$ such that $|\sub{w}{i..j}|=j-i+1$ 
and $w = p \sub{w}{i..j} s$ for two words $s$ and $p$ satisfying $|p| = i-1$.
Note that, when $j = i - 1$, we have $\sub{w}{i..j} = \varepsilon$.
When $i=j$, we also denote by $\sub{w}{i}$ the factor $\sub{w}{i..i}$, which is the
$i^{\rm th}$ letter of $w$.
In particular, $\sub{w}{1}$ and $\sub{w}{|w|}$ are respectively the first and the last
letter of $w$.

\vspace{+0.07cm}

An overlap is a word of the form $xuxux$ where $x \in A$ and $u\in A^*$. Note that
an equivalent definition is obtained by taking $x \in A^+$. An overlap-free word
is a word in which none of the factors are an overlap.

An overlap is said to be \textit{pure} if all its proper factors are overlap-free. 
Let us remark that, if $avava$ is a  pure overlap, then $a\in A$.

Repeating the reasoning that an overlap is either pure or contains an overlap, 
we obtain that a word that contains
an overlap also contains a pure overlap (as a factor of the first).

Powers of a word are defined inductively by $u^0 = \varepsilon$, and 
for every integer $n \geq 1$, $u^n = u u^{n-1}$.
Given an integer $k \geq 2$, since the case $\varepsilon^k$ is of little interest, 
we call a $k$-power any word $u^k$ with $u\neq \varepsilon$.
A $2$-power (resp. a $3$-power) is also called a \textit{square} (resp. a \textit{cube}).
Given an integer $k \geq 2$, a word is \textit{$k$-power-free} if it does not contain any 
$k$-power as factor. A \textit{primitive} word is a word that is not a $k$-power of 
another word whatever the integer $k \geq 2$.
A (non-empty) $k$-power $v^k$ is called \textit{pure} if 
any proper factor of $v^k$ is $k$-power-free.
In particular, we say that $v^k$ is a pure $k$-power of a word $w$ if $v^k\in{\rm Fcts}(w)$ and $v^k$ is pure.
As for pure overlap, repeating the fact that a non-pure $k$-power contains a $k$-power, which is itself pure
or not, 
we obtain that any $k$-power contains a pure $k$-power.
Moreover, if $v^k$ is a pure $k$-power then $v$ is primitive but the converse does not hold.

\begin{rqe}\label{RemPrim}

Every conjugated word of a primitive word is primitive.

\end{rqe}

\begin{rqe}\label{RemMirOver}
A word $u$ is an overlap (resp. a pure-overlap, a $k$-power or a pure $k$-power)
if and only if $\tilde{u}$ is an overlap (resp. a pure-overlap, a $k$-power or a pure $k$-power).

\end{rqe}

\begin{rqe}\label{RemFacUk}
For any non-empty word $u$ and any integer $k\geq 3$, a factor of $u^k$ of length $|u|$ is a conjugated
word of $u$.
As a consequence, any factor of $u^k$ of length greater than $2|u|$ contains an overlap.

\end{rqe}

\begin{rqe}\label{stdkp}
A word cannot start with two different pure $k$-powers or with two different pure overlaps.
\end{rqe}

\begin{rqe}\label{RemCubeOverlap}
For any word $w$ of length less than or equal to $4$, we have the equivalence
between $w$ cube-free and $w$ overlap-free.
\end{rqe}

The following proposition gives the well-known solutions
(see~\cite{Lot1983}) to two elementary equations on words and will be
widely used in the following sections:

\begin{prop}
\label{Lothaire} 
Let $A$ be an alphabet and $u,v,w$ three words
over $A$.
\begin{enumerate}
\topsep0cm
\itemsep0cm
\item \label{Lotcase1}
If $vu=uw$ and $v \neq \varepsilon$, then there exist two words $r$ and
$s$ in $A^*$, and an integer $n$ such that $u=r(sr)^{n}$, $v=rs$
and $w=sr$.

\item \label{Lotcase2}
If $vu=uv$, then there exist a word $w$ in $A^*$, and two integers
$n$ and $p$ such that $u=w^{n}$ and $v=w^{p}$.

\end{enumerate}
\end{prop}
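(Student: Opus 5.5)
We prove the two items of Proposition~\ref{Lothaire} in turn, each by induction on the length of the words involved; the second item will use the first.

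\textbf{Item~\ref{Lotcase1}.} Assume $vu=uw$ with $v\neq\varepsilon$. Then $|v|=|w|$. We argue by induction on $|u|$.

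If $|u|<|v|$, then $u$ is a prefix of $vu$, hence a prefix of $v$. Write $v=ur'$. Comparing $vu=ur'u$ with $uw$ gives $w=r'u$. Taking $r=u$, $s=r'$ and $n=0$ yields $u=r$, $v=rs$ and $w=sr$, as required.

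If $|u|\geq|v|$, then $v$ is a prefix of $u$. Write $u=vu'$. Cancelling the prefix $v$ in $vvu'=vu'w$ gives $vu'=u'w$, with $|u'|<|u|$. The induction hypothesis gives $u'=r(sr)^{m}$, $v=rs$ and $w=sr$. Then
\[
u=vu'=rs\,r(sr)^m=r(sr)^{m+1},
\]
which is the required form with $n=m+1$. The base case $u=\varepsilon$ falls under the first case, since $|u|=0<|v|$.

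\textbf{Item~\ref{Lotcase2}.} Assume $vu=uv$. We argue by induction on $|u|+|v|$.

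If $u=\varepsilon$ or $v=\varepsilon$, the conclusion is immediate: if $u=\varepsilon$, take $w=v$, $n=0$, $p=1$, and symmetrically if $v=\varepsilon$.

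Otherwise, by symmetry we may assume $|v|\leq|u|$. Then $v$ is a prefix of $u$. Write $u=vu'$. Substituting into $vu=uv$ gives $vvu'=vu'v$, and cancelling $v$ on the left gives $vu'=u'v$. Since $|v|+|u'|<|u|+|v|$, the induction hypothesis gives $v=w^{p}$ and $u'=w^{m}$ for some word $w$. Hence $u=vu'=w^{p+m}$, which is the required form with $n=p+m$.

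Alternatively, item~\ref{Lotcase2} follows from item~\ref{Lotcase1} applied with $w=v$, which gives $v=rs=sr$, after which the induction is applied to the shorter pair $(r,s)$.

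The only delicate point in either item is the prefix comparison. It must be oriented so that the shorter of the two words is cancelled, which is what guarantees that the induction parameter strictly decreases. The case $v\neq\varepsilon$ matters in item~\ref{Lotcase1}: without it, $u=vu'$ would give $u'=u$ and the induction would not terminate.
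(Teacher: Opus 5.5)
Your proof is correct. In both items the induction parameter strictly decreases because the cancelled word is non-empty; in item~1 this is exactly where $v\neq\varepsilon$ is used. The base cases are handled properly.

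The paper gives no proof of its own here; it quotes both items as well-known facts from Lothaire. Your argument, which cancels the shorter word as a common prefix and inducts on length, is the standard elementary proof, so it makes the cited statement self-contained.
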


\begin{rqe}\label{RemLotC1}

In Case~\ref{Lotcase1} of Proposition~\ref{Lothaire}, if $|u|>|v|$ then we get that the two words $vu$ 
and $uw$ contain an overlap.

\end{rqe}

We also need a property on words that is an immediate consequence
of Proposition~\ref{Lothaire}(\ref{Lotcase2}).

\begin{lemma}{\rm \cite{Ker1986,Lec1985}}
\label{factint}
If a non-empty word $v$ is an internal factor of $vv$,
i.e., if there exist two non-empty words $x$ and $y$ such that $vv=xvy$,
then there exist a non-empty word $t$ and two integers $i,j \geq 1$
such that $x=t^i$, $y=t^j$, and $v=t^{i+j}$.
\end{lemma}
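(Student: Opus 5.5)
The plan is to reduce the equation $vv=xvy$ to a commutation equation and then apply Proposition~\ref{Lothaire}(\ref{Lotcase2}). First, compare lengths. Since $|vv|=2|v|=|x|+|v|+|y|$, we get $|x|+|y|=|v|$. Because $x$ and $y$ are non-empty, $0<|x|<|v|$. As $x$ is a prefix of $vv$ shorter than $v$, it is a prefix of $v$, so we can write $v=xz$ for some word $z$ with $|z|=|v|-|x|=|y|$, and $z$ is non-empty.

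Next, substitute $v=xz$ into the equation. This gives $xzxz=x\,xz\,y$. Cancelling the common prefix $x$ leaves $zxz=xzy$. In this equation, $zx$ and $xz$ are the prefixes of length $|x|+|z|$ on each side, so $zx=xz$. Comparing the suffixes of length $|y|=|z|$ then gives $z=y$.

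Now apply Proposition~\ref{Lothaire}(\ref{Lotcase2}) to $xz=zx$. It yields a word $t$ and integers $i,j$ with $x=t^i$ and $z=t^j$. Since $x$ and $z$ are non-empty, $t\neq\varepsilon$ and $i,j\geq 1$. Hence $y=z=t^j$ and $v=xz=t^{i+j}$, which is exactly the statement.

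I do not expect any real obstacle here. The only point needing care is the length bookkeeping: we must know that $x$ is a proper prefix of $v$ before writing $v=xz$, and that $z=y$ (rather than merely $|z|=|y|$) before concluding. Both follow from $|x|+|y|=|v|$ and the non-emptiness of $x$ and $y$.
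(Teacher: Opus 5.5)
Your proof is correct and matches the paper's approach. The paper gives no argument beyond calling the lemma an immediate consequence of Proposition~\ref{Lothaire}(\ref{Lotcase2}); your reduction of $vv=xvy$ to $xz=zx$ with $z=y$, followed by the commutation solution, supplies exactly the details it leaves implicit.
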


\begin{rqe}\label{RemFactInt}

Lemma~\ref{factint} means that, if $v$ is primitive then $v$ can not be an internal factor of $vv$.

\end{rqe}

We also use a well-known result on combinatorics on words:

\begin{prop}[Fine and Wilf]{\rm \cite{Lot1983,Lot2002}}
\label{Fine} Let $x$ and $y$ be two words. If a power of $x$ and a
power of $y$ have a common prefix of length at least equal to
$|x|+|y|-gcd(|x|,|y|)$ then $x$ and $y$ are powers of the same
word.
\end{prop}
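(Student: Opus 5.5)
Since this proposition is quoted from \cite{Lot1983,Lot2002}, I would give the standard proof in terms of periods, by induction following Euclid's algorithm. Discard the trivial cases first. If $x=\varepsilon$ or $y=\varepsilon$ there is nothing to prove, since $\varepsilon$ is a power of any word. So set $p=|x|\geq 1$, $q=|y|\geq 1$ and $d=\gcd(p,q)$. Let $w$ be the common prefix, of length $|w|\geq p+q-d$.

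Say that $w$ has period $r$ if $\sub{w}{i}=\sub{w}{i+r}$ for all $1\leq i\leq |w|-r$. Because $w$ is a prefix of a power of $x$, it has period $p$ and begins with $x$ when $|w|\geq p$. Likewise it has period $q$ and begins with $y$. The core claim is the periodic form of the statement:

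\emph{if a word $w$ has periods $p$ and $q$ and $|w|\geq p+q-\gcd(p,q)$, then $w$ has period $\gcd(p,q)$.}

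Granting the claim, the proposition follows directly. Since $|w|\geq \max(p,q)$, both $x=\sub{w}{1..p}$ and $y=\sub{w}{1..q}$ are prefixes of $w$. As $w$ has period $d$ and $d$ divides both $p$ and $q$, we get $x=(\sub{w}{1..d})^{p/d}$ and $y=(\sub{w}{1..d})^{q/d}$.

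I would prove the claim by induction on $p+q$. If $p=q$, then $d=p$ and there is nothing to show. Otherwise, by symmetry, assume $p>q$. Since $d$ divides both, we have $p\geq q+d$. Let $w'=\sub{w}{1..|w|-q}$. I would check that $w'$ has period $p-q$. Take $1\leq i\leq |w'|-(p-q)=|w|-p$. Then
\[
\sub{w}{i}=\sub{w}{i+p}=\sub{w}{i+p-q},
\]
where the first equality uses period $p$, and the second uses period $q$, legitimate because $(i+p-q)+q=i+p\leq |w|$. Clearly $w'$ also keeps period $q$. Moreover
\[
|w'|\geq p-d=(p-q)+q-\gcd(p-q,q),
\]
because $\gcd(p-q,q)=d$. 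The induction hypothesis, applied with the pair $(p-q,q)$, therefore gives that $w'$ has period $d$.

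The remaining step, which is the only delicate bookkeeping point, is to lift period $d$ from $w'$ back to $w$. We have $|w'|\geq p-d\geq q$, so the whole block $\sub{w}{1..q}$ lies inside $w'$ and has period $d$. Since $w$ has period $q$, every letter of $w$ is determined by its position modulo $q$ through this block. Since $d$ divides $q$, the block's period $d$ then propagates to all of $w$. One must also check that every length inequality in the induction stays nonnegative, which it does because $p\geq q+d$. This completes the induction, and with it the proof of Proposition~\ref{Fine}.
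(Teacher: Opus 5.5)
Your proof is correct. The paper gives no proof of this proposition: it simply imports the result from \cite{Lot1983,Lot2002}, so there is no in-paper argument to compare yours against.

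What you give is the standard self-contained route. You pass to the periodicity formulation, in which the common prefix $w$ has periods $|x|$ and $|y|$, and then run Euclid's algorithm on the pair of periods. The index checks in your inductive step all go through: $i+p-q\geq 1$ and $i+p\leq |w|$ for the period-$(p-q)$ claim, $|w'|\geq (p-q)+q-\gcd(p-q,q)$ for the induction hypothesis, and $|w'|\geq p-d\geq q$ for the lifting.

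The one place worth stating more crisply is that lifting step. Since $d$ divides $q$ and $\sub{w}{1..q}$ has period $d$, we have $\sub{w}{1..q}=(\sub{w}{1..d})^{q/d}$. Period $q$ makes $w$ a prefix of a power of $\sub{w}{1..q}$, hence a prefix of a power of $\sub{w}{1..d}$, which is exactly period $d$ for $w$.
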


As a consequence of Proposition~\ref{Fine}, we get:

\begin{cor}[Ker\"anen]{\rm \cite{Ker1986}}
\label{Kera} Let $x$ and $y$ be two words. If a power of $x$ and a
power of $y$ have a common factor of length at least equal to
$|x|+|y|-gcd(|x|,|y|)$ then there exist two words $t_1$ and $t_2$
such that $x$ is a power of $t_1t_2$ and $y$ is a power of
$t_2t_1$ with $t_1t_2$ and $t_2t_1$ primitive words. Furthermore,
if $|x|>|y|$ then $x$ is not primitive.

\end{cor}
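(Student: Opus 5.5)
The plan is to reduce the statement to Proposition~\ref{Fine} by moving the common factor to a prefix position; this only changes $x$ and $y$ into conjugates. Throughout I assume $x$ and $y$ are non-empty, since otherwise the statement is degenerate. Let $w$ be a common factor of $x^m$ and $y^n$ with $|w| \geq |x|+|y|-\gcd(|x|,|y|)$.

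First, write $x^m = p\,w\,s$. Write $|p| = q|x| + r$ with $0 \leq r < |x|$, and set $x = x_1x_2$ with $|x_1| = r$. Then $w$ is a prefix of $(x_2x_1)^{m}$. Put $x' = x_2x_1$; this is a conjugated word of $x$ with $|x'| = |x|$. In the same way, $w$ is a prefix of a power of a conjugated word $y'$ of $y$ with $|y'| = |y|$. Now a power of $x'$ and a power of $y'$ share the prefix $w$, whose length is at least $|x'|+|y'|-\gcd(|x'|,|y'|)$. Proposition~\ref{Fine} therefore gives a word $z$ with $x' = z^a$ and $y' = z^b$. Replacing $z$ by its primitive root, I may assume that $z$ is primitive and that $a,b \geq 1$.

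Next, I transport the factorizations back to $x$ and $y$. Since $x$ is a conjugate of $x' = z^a$, I will check that $x = u^a$ for some conjugate $u$ of $z$. Indeed, rotating $z^a$ by $j$ letters, with $j = c|z| + d$ and $0 \leq d < |z|$, gives $(z_2z_1)^a$ where $z = z_1z_2$ and $|z_1| = d$. Likewise $y = v^b$ for some conjugate $v$ of $z$. The words $u$ and $v$ are both conjugates of $z$, hence conjugates of each other. So there exist words $t_1, t_2$ with $u = t_1t_2$ and $v = t_2t_1$. By Remark~\ref{RemPrim}, both $u$ and $v$ are primitive, since they are conjugates of the primitive word $z$. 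This proves the main claim.

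Finally, suppose $|x| > |y|$. Then $a|z| > b|z|$, so $a > b \geq 1$ and hence $a \geq 2$. It follows that $x = (t_1t_2)^a$ is not primitive. The only step needing care is the bookkeeping showing that conjugation of $z^a$ yields the $a$-th power of a conjugate of $z$, and that two arbitrary conjugates of $z$ can be written as $t_1t_2$ and $t_2t_1$. Both follow directly from the definitions, so no real obstacle is expected.
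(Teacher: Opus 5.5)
Your proof is correct and follows the route the paper intends: the paper gives no separate proof and only presents the statement as a consequence of Proposition~\ref{Fine} (citing Ker\"anen). Your two steps supply the omitted bookkeeping: conjugating to move the common factor to a prefix position, then rotating $z^a$ and $z^b$ back to powers of two conjugates $t_1t_2$ and $t_2t_1$ of a primitive $z$. Your assumption that $x$ and $y$ are non-empty is actually necessary, since for $y=\varepsilon$ and $x$ a single letter the hypothesis holds trivially while the final clause fails.
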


\begin{rqe}\label{RemPrimFactInt}

When $t_1t_2$ is a primitive word, the equation 
$t_2t_1t_2t_1=pt_1t_2s$ implies $p=t_2$ and $s=t_1$. Indeed, if $|p|<|t_2|$ or if $|s|<|t_1|$
then $t_1t_2$ is an internal factor of $(t_1t_2)^2$:~a contradiction with Remark~\ref{RemFactInt}.

By induction, we get that, for any integer $\ell \geq 2$, the equation $(t_2t_1)^\ell =p(t_1t_2)^{\ell -1}s$ 
also implies $p=t_2$ and $s=t_1$.

\end{rqe}


\begin{lemma}\label{LemConjugPureOverlap}

Let $a$ and $b$ be two letters of $A$ and let $u$ and $v$ be two words in $A^*$ such that
$au$ and $bv$ are conjugated words.
If $auaua$ is a pure overlap then the same holds for $bvbvb$.

\end{lemma}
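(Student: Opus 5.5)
Write $w=au$ and $w'=bv$, and let $n=|w|=|w'|$. By hypothesis $w=t_1t_2$ and $w'=t_2t_1$. We may assume $t_1\neq\varepsilon$ and $t_2\neq\varepsilon$, since otherwise $w'=w$ and there is nothing to prove. The key observation is that $bvbvb$ occurs as a factor of $auauau\ldots$ shifted by $|t_1|$. Indeed, $(t_1t_2)^3=t_1(t_2t_1)^2t_2$, and $t_2$ begins with $b$, so $bvbvb=(t_2t_1)^2b$ is a factor of $w^3$. However, $w^3$ is not itself a factor of the pure overlap $auaua$, so this embedding cannot be used directly. The plan is instead to work with the periodic structure: an overlap $xyxyx$ with $x$ a letter is exactly a word of length $2|xy|+1$ with period $|xy|$. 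So $bvbvb$ is certainly an overlap, and what must be shown is that it is \emph{pure}.

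\textbf{Step 1: primitivity.} Since $auaua$ is a pure overlap, $au$ is primitive. Otherwise $au=z^m$ with $m\geq 2$, and then $auaua$ contains $z^{2m}a$, which begins with $zzz\ldots$ followed by the first letter of $z$. This gives a proper overlap factor, contradicting purity. (Alternatively, if $m\geq3$ then $z^3$ is a proper factor, and if $m=2$ then $zzzz$ with $|z|\geq1$ contains the overlap $zz\,z[1]$.) By Remark~\ref{RemPrim}, $bv$ is then primitive as well.

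\textbf{Step 2: reduction to periods.} Suppose $bvbvb$ is not pure. Then it has a proper factor that is a pure overlap $cscsc$ with $c$ a letter, with period $p=|cs|$ and length $2p+1\leq 2n$. So $p<n$. Let $F=cscsc$. It is a factor of the bi-infinite $n$-periodic word generated by $w'$, which is the same word as the one generated by $w$. The aim is to transport $F$ into $auaua$, which has length $2n+1$ and contains every factor of length at most $2n+1$ of that periodic word that starts at any of the positions $1,\ldots,n+1$ modulo $n$. Since $|F|\leq 2n$ and every residue mod $n$ occurs as a starting position within the first $n$ positions of $auaua$, every factor of length at most $n+1$ of the periodic word lies in $auaua$. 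This handles the case $2p+1\leq n+1$ immediately: $F$ would be a proper factor of $auaua$ that is an overlap, contradicting purity.

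\textbf{Step 3: long internal overlaps (the main obstacle).} This is the case $n/2\leq p<n$, where $|F|$ can reach $2n$ and a copy of $F$ need not fit inside $auaua$. Here I would use Fine and Wilf. $F$ has periods $p$ and $n$, and its length $2p+1$ satisfies $2p+1\geq p+n-\gcd(p,n)$, because $p\geq n-1$... but this holds only when $p$ is close to $n$. The general tool is Corollary~\ref{Kera}. The word $F$ is a common factor of a power of $cs$ and a power of $w'$. If $|F|\geq n+p-\gcd(n,p)$, then $w'$ and $cs$ are powers of conjugate primitive words, and since $|w'|>|cs|$, $w'$ is not primitive, contradicting Step~1. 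So we may assume $2p+1<n+p-\gcd$, that is, $|F|\leq n+p-2$. I would then show that such an $F$, placed at the appropriate residue, fits inside $auaua$. Its start can be chosen in positions $1..n$, and it fits if start $+|F|-1\leq 2n+1$. Choosing the start position $\leq n$ within the right residue class gives an end of at most $n+(n+p-2)-1<2n+1$ when $p\leq n-1$. So a copy of $F$ lies in $auaua$, and it is proper because $|F|<2n+1$. This contradicts purity and finishes the proof. The delicate points are checking the Keränen length bound and confirming the fit inequality with exact index arithmetic.
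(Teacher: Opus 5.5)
Steps 1 and 2 are correct, but Step 3 fails. The closing inequality is false: $n+(n+p-2)-1=2n+p-3$, and this is at most $2n+1$ only when $p\le 4$. So the bound $|F|\le n+p-2$ from Corollary~\ref{Kera} does not place $F$ inside $auaua$.

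A better choice of residue cannot repair this, because the residue of $F$ is fixed by its position in $bvbvb$. Write $au=t_1t_2$, $bv=t_2t_1$ and $bvbvb=GFH$. Periodicity only provides copies of $F$ at positions congruent to its own modulo $n$, and such a copy lies in $auaua$ exactly when $|G|\ge|t_2|$ or $|H|\ge|t_1|$. These are precisely the two easy cases of the paper's proof: there $F$ is a factor of $auau$ or of $t_2t_1t_2a$.

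The remaining configuration, $|G|<|t_2|$ and $|H|<|t_1|$, is the whole content of the lemma. In it, the periodic copy of $F$ overruns $auaua$, and the Fine--Wilf length bound gives no contradiction. For example, $n=20$, $p=12$ gives $|F|=25<n+p-\gcd(n,p)=28$, yet $|F|>n+1$. So length considerations alone cannot exclude this case.

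The missing idea is to use the two periods $p<n$ of $F$ more finely than Fine and Wilf does. The paper does this by an explicit Lyndon--Sch\"utzenberger case analysis on $T_1,T_2$ (your $G,H$). Within your framework there is also a short fix.

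Suppose $p<n\le 2p-1$ and set $q=n-p\le p-1$. For $1\le i\le p+1-q$ we have $F[i]=F[i+n]=F[i+q]$, so the prefix $F[1..p+1]=csc$ has period $q$. There are two cases:
\begin{itemize}
\item If $p\ge 2q$, the prefix of length $2q+1$ of $F$ is an overlap.
\item Otherwise $csc=xyx$ with $|xy|=q$ and $|x|=p+1-q\ge 2$. Then $F=xyx_0xyx$ with $x=x_0c$, and since $x_0$ begins with $c$, the factor $x_0x_0c$ is an overlap.
\end{itemize}
In both cases the overlap is a proper factor of $F$, contradicting the purity of $F$. Hence $n\ge 2p$, i.e.\ $|F|\le n+1$, and that case is exactly what your Step 2 already disposes of. With this fix, Step 1 and Corollary~\ref{Kera} are no longer needed.
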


\begin{proof}

Let us assume that $au=v_1v_2$ and $bv=v_2v_1$ for two words $v_1$ and $v_2$ in $A^*$.
And, by contradiction, let us assume that $bvbvb$ is not a pure overlap, that is, 
$v_2v_1v_2v_1b=T_1 \, ctctc \, T_2$ for some letter $c$ and some words $t$, $T_1$ and
$T_2$ with $T_1T_2 \neq \varepsilon$. It means that $au \neq bv$ and so $v_1,v_2 \neq \varepsilon$.
Therefore, $a$ is the first letter of $v_1$ and $b$ is the first letter of $v_2$.

If $|T_1| \geq |v_2|$ then $ctctc$ is a factor of $v_1v_2v_1b$, that is, of $auau$: a contradiction
with the hypotheses.
In a same way, if $|T_2| \geq |v_1|$ then $ctctc$ is a factor of $v_2v_1v_2a$: a contradiction.

From now, we assume that $|T_1| < |v_2|$ and $|T_2| <|v_1|$. 

From the equality $v_2v_1v_2v_1b=T_1 \, ctctc \, T_2$
with $T_1T_2 \neq \varepsilon$, we get that $|ct|< |v_2v_1|$.

If $T_2 = \varepsilon$ then $T_1 \neq \varepsilon$, $c=b$,  and 
there exists a non-empty word $X$ such that $v_2=T_1X$ 
and $ctct=Xv_1 \, T_1Xv_1$. It implies that there exists two non-empty words $t_1$ and $t_2$
of the same length such that $T_1=t_1t_2$,
$ct=(Xv_1)t_1=t_2(Xv_1)$.
From Case~\ref{Lotcase1} of the proposition~\ref{Lothaire}, 
there exist two words $r$ et $s$
and an integer $n$ such that $t_2=rs$, $Xv_1=r(sr)^n$ and $t_1=sr$.
If $r \neq \varepsilon$ then $r$ (prefix of $Xv_1)$ starts with $c (=b)$.
And $v_2v_1=t_1t_2Xv_1$, factor of $auau$, contains $rrsr$ where $sr=t_1$ (prefix of $v_2$) 
also starts with $c$.
That is, $rrsr$ contains an overlap: a contradiction.
If $r = \varepsilon$ then $Xv_1=s^n(\neq \varepsilon)$ and $t_1=t_2=s$.
It follows that $v_2v_1=t_1t_2Xv_1$, again, factor of $auau$, contains $s^3$: a contradiction.

If $T_1 = \varepsilon$, 
in the same way as in the case $T_2 = \varepsilon$, and using mirror image,
we obtain a contradiction with the hypotheses.

If $T_1 \neq  \varepsilon$ and $T_2 \neq  \varepsilon$, let $T_2'$ be the word such that
$T_2=T_2'b$.
There exist two non-empty words 
$X$ and $Y$ such that $v_1=YT_2'$, $v_2=T_1X$ 
and $ctctc=Xv_1 \, v_2Y=XYT_2'\, T_1XY$.
Since $|ct| \leq |XY|$,
let $u_1$ and $u_2$ be the words of the same length such that $T_2' T_1=u_1cu_2$, 
$ct=XYu_1$ and $tc=u_2XY$.
Since $ctc=(cu_2)(XY)=(XY)(u_1c)$,
from Case~\ref{Lotcase1} of Proposition~\ref{Lothaire}, 
there exist two words $r$ et $s$
and an integer $n$ such that $cu_2=rs$, $XY=r(sr)^n$ and $u_1c=sr$.
If $|r| = 0$, then $s=u_1c=cu_2$ and $XY=s^n$ with $n \geq 1$.
One of the words $T_1ctc(=T_1s^{n+1})$ or $ctcT_2'(=s^{n+1}T_2')$ is factor of $v_2v_1$ that is of $(au)^2$.  
But the last letter of $T_1$ is the last letter of $cu_2$ that is of $s$
and the first letter of $T_2'$ is the first letter of $u_1c$ that is of $s$.
It means that $auau=(v_1v_2)^2$ contains an overlap: a contradiction.
If $|r| = 1$, then $r=c$ and $s=u_1=u_2$. Since $X$ and $Y$ are non-empty, $Y$ ends with $c$
and $X$ starts with $c$.
It implies that $v_1v_2=YT_2'T_1X$, factor of $au$ contains $cscsc$: a contradiction.
If $|r| \geq 2$, there exists a word $r'$ such that
$r=cr'c$. Therefore, $u_1cu_2$ factor of $v_2v_1$ contains $cr'cr'c$: a final contradiction.
%
%
%
\end{proof}

%

\begin{lemma}\label{LemPurekmu}

Let $a$ be a letter of $A$, let $u$ be a word in $A^*$ and let $k \geq 3$ be an integer.
If $auaua$ is a pure overlap then $(au)^{k-1}$ is $k$-power-free.

\end{lemma}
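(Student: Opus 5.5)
The plan is to argue by contradiction, using the conjugation invariance of pure overlaps (Lemma~\ref{LemConjugPureOverlap}) together with a length count. Put $w=au$. Suppose that $(au)^{k-1}$ contains a $k$-power $v^k$ with $v\neq\varepsilon$.

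First, the length count. Since $v^k$ is a factor of $w^{k-1}$, we have $k|v|\leq (k-1)|w|$, and therefore $|v|<|w|$. Because $k\geq 3$, the word $v^k$ has the prefix $vv\,\sub{v}{1}$. This prefix is an overlap, and its length satisfies $2|v|+1<2|w|+1$.

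Second, we place this overlap inside a conjugate of the pure overlap. Let $p$ be the position in $w^{k-1}$ where the occurrence of $v^k$ starts. View $w^{k-1}$ as a prefix of the infinite periodic word $www\cdots$. Let $w'$ be the factor of length $|w|$ of $www\cdots$ starting at position $p$. By Remark~\ref{RemFacUk}, $w'$ is a conjugated word of $w$; write $w'=bv'$ with $b\in A$. Starting at position $p$, the periodic word $www\cdots$ reads $w'w'w'\cdots$. Hence both $vv\,\sub{v}{1}$ and $bv'bv'b$ are prefixes of $w'w'w'\cdots$ starting at that position. Since $|vv\,\sub{v}{1}|<2|w|+1=|bv'bv'b|$, the word $vv\,\sub{v}{1}$ is a proper prefix of $bv'bv'b$. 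Note that we never need $bv'bv'b$ itself to fit inside $w^{k-1}$; it only has to be a factor of the periodic extension.

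Third, we conclude. The words $au$ and $bv'$ are conjugated and $auaua$ is a pure overlap, so Lemma~\ref{LemConjugPureOverlap} shows that $bv'bv'b$ is a pure overlap. By definition, all its proper factors are overlap-free. This contradicts the fact that $vv\,\sub{v}{1}$ is an overlap and a proper factor of it, so $(au)^{k-1}$ is $k$-power-free.

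The only delicate step is the bookkeeping that identifies the factor of $www\cdots$ starting at $p$ with $w'w'w'\cdots$, where $w'$ is a conjugate of $w$. The rest follows directly from the inequality $|v|<|w|$.
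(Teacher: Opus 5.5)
Your proof is correct, but it takes a different route from the paper's.

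\textbf{The paper's route.} The paper handles $k=3$ separately: $(au)^2$ is a proper factor of the pure overlap, hence cube-free. For $k\geq 4$ it splits on lengths.
\begin{itemize}
\item If $|v^{k-1}|<|au|$, then $v^3$ lies inside a conjugate of $au$, hence inside $auau$, which contradicts pureness.
\item Otherwise $|v^k|\geq |v|+|au|$. Ker\"anen's form of Fine and Wilf (Corollary~\ref{Kera}) then forces $au$ to be non-primitive, so $auau$ contains a cube.
\end{itemize}

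\textbf{Your route.} You use Lemma~\ref{LemConjugPureOverlap} to carry pureness over to the conjugate $bv'$ read from the starting position of $v^k$. After that you only need $|v|<|au|$ to make the overlap $vv\,\sub{v}{1}$ a proper prefix of $bv'bv'b$. You correctly note that $bv'bv'b$ only has to be a factor of the periodic extension, not of $(au)^{k-1}$. That observation is what lets one argument cover every $k\geq 3$ at once.

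\textbf{What each buys.} Your argument needs no case split and no periodicity lemma. It also proves more: no overlap of period smaller than $|au|$ occurs anywhere in the periodic word $auau\cdots$. The cost is that all the work moves into Lemma~\ref{LemConjugPureOverlap}, whose proof in the paper is a long case analysis. The paper's proof of the present lemma, by contrast, uses only pureness of $auaua$ itself plus standard Fine--Wilf machinery.
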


\begin{proof}

The result is obvious when $k=3$. 
By contradiction, assume that $(au)^{k-1}$  contains a $k$-power $v^k$ with $k \geq 4$.
In particular, we have $0<|v|<|au|$.

If $|v^{k-1}| <|au|$ and, in particular, $|v^{3}| <|au|$, then there exist a conjugated word of $au$ 
(so a factor of $auau$) that contains $v^{3}$:~a contradiction with the fact that $auaua$ is pure.

If $|v^k| \geq |v| +|au|$, then, by Lemma~\ref{Kera} and Remark~\ref{RemPrim},
the word $au$ is not primitive. This means that $auau$ contains a cube: a contradiction with the fact
that $auaua$ is pure.
\end{proof}



\subsection{\label{sectionMorphisms}Morphisms}

A \textit{morphism} $f$ from $A^*$ to $B^*$ is a mapping
from $A^*$ to $B^*$ such that $f(uv) = f(u)f(v)$ for all words $u, v$ in $A^*$.
When $B$ has no importance, we say that $f$ is a morphism on
$A^*$ or that $f$ is defined on $A^*$. 
Note that a morphism on $A^*$ is
entirely determined by the images of the letters of $A$.

Given a set $X$ of words over $A$, and given a morphism $f$ on $A^*$,
we denote by $f(X)$ the set $\{f(w) \mid w \in X\}$.

Given an integer $L$, $f$ is \textit{$L$-uniform}
if $|f(a)| = L$ for every letter $a$ in $A$.
A morphism $f$ is \textit{uniform} if it is $L$-uniform for some integer $L \geq 0$.

A morphism $f$ on $A^*$ is overlap-free if the image of any overlap-free word by $f$ is also overlap-free. 
In other words, an overlap-free morphism is a morphism that \textit{preserves} overlap-free words.
For instance, the \textit{empty morphism} $\epsilon$
($\forall a \in A$, $\epsilon(a) = \varepsilon$) or
the \textit{identity endomorphism} $Id$
($\forall a \in A$, $Id(a) = a$) 
are overlap-free.

A morphism $f$ on $A^*$ is \textit{$k$-power-free}  ($k \geq 2$)
if $f(w)$ is $k$-power-free for every $k$-power-free word $w$ in $A^+$.

We say that a morphism is \textit{non-erasing} if, for all letters $a \in A$,
$f(a) \neq \varepsilon$.
The {empty morphism} $\epsilon$
is the only morphism that is both erasing and overlap-free or
$k$-power-free ($k \geq 2$).
Indeed, for any erasing morphism $f \neq \epsilon$ defined on $A^*$,
there exist two different letters $a$ and $b$ in $A$ (remember $\card{A} \geq 2$)
such that $f(a) \neq \varepsilon$, $f(b) = \varepsilon$, and so
$f(aba^{k-1})$ contains a $k$-power and especially an overlap when $k=3$ with 
$aba^{k-1}$ $k$-power-free and $abaa$ overlap-free.
From now, since the case $f=\epsilon$ is of little interest, we always assume that $f \neq \epsilon$.

A morphism on $A^*$ is called \textit{prefix} (resp. \textit{suffix})
if, for all different letters $a$ and $b$ in $A$, 
the word $f(a)$ is not a prefix (resp. not a suffix) of $f(b)$.
A prefix (resp. suffix) morphism is non-erasing.
A morphism is \textit{bifix} if it is prefix and suffix.

A non-erasing morphism on $A^*$ is called \textit{strongly prefix} (resp. \textit{strongly suffix})
if, for all different letters $a$ and $b$ in $A$, 
the words $f(a)$ and $f(b)$ do not start (resp. do not end) with the same letter.
A morphism is \textit{strongly bifix} if it is strongly prefix and strongly suffix.

\begin{rqe}\label{Rem2bif}
If $f$ is a strongly bifix morphism on $A^*$, for any letter $x \in A$,
$x$ is entirely determined by the first letter or the last letter of $f(x)$.
\end{rqe}

\begin{rqe}\label{Rem1bif}
By definition, a strongly bifix morphism is a bifix morphism.
\end{rqe}

\begin{lemma}\label{LemAbifix}
An overlap-free morphism is  a strongly bifix morphism.
\end{lemma}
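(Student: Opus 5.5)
We argue by contraposition and treat the two halves of ``strongly bifix'' separately. The suffix half reduces to the prefix half by mirror images. If $f$ is overlap-free, then the morphism $\tilde f$ defined by $\tilde f(x)=\widetilde{f(x)}$ is also overlap-free, because $\tilde f(w)=\widetilde{f(\tilde w)}$ and Remark~\ref{RemMirOver} applies. Moreover, $f$ is strongly suffix exactly when $\tilde f$ is strongly prefix. It therefore suffices to prove that an overlap-free morphism $f\neq\epsilon$ is strongly prefix. We already know that $f$ is non-erasing, by the discussion preceding the definitions.

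Assume then that there are two different letters $a,b\in A$ such that $f(a)$ and $f(b)$ both start with the same letter $c$. We look for a short overlap-free word whose image contains an overlap. The natural candidates are words in which $a$ and $b$ are each followed by the same letter. Take $w=aab$, which is overlap-free. Write $f(a)=cx$ and $f(b)=cy$. Then
\[
f(aab)=cx\,cx\,cy,
\]
and this contains $cxcxc$. That word is an overlap of the form $\alpha u\alpha u\alpha$ with $\alpha=c$ and $u=x$. Since $f$ is non-erasing, $f(a)\neq\varepsilon$, so the decomposition $f(a)=cx$ makes sense even when $x=\varepsilon$. In that case the factor is $ccc$, which is still an overlap. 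This contradicts the assumption that $f$ is overlap-free, so $f$ is strongly prefix.

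For the suffix part we may also argue directly rather than via mirrors. If $f(a)=xc$ and $f(b)=yc$ end with the same letter, then $f(baa)=yc\,xc\,xc$ contains $cxcxc$, while $baa$ is overlap-free.

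The main point to check, and it is light, is that the chosen preimage words $aab$ and $baa$ are genuinely overlap-free: they have length $3$, and the only overlap of length $3$ is a cube of a letter, which is excluded because $a\neq b$. We must also make sure that the non-erasing property is in hand before writing $f(a)=cx$; this comes from the earlier remark that $\epsilon$ is the only erasing overlap-free morphism. Finally, since $f$ is strongly prefix and strongly suffix, it is strongly bifix, and by Remark~\ref{Rem1bif} it is in particular bifix.
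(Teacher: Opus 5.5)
Your proof is correct and is the same argument as the paper's. You argue by contraposition using the overlap-free words $aab$ and $baa$, whose images contain the overlap $cxcxc$. Your explicit handling of non-erasingness and of the case $f\neq\epsilon$ only spells out what the paper assumes tacitly.
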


\begin{proof}

Let $f$ be a morphism from $A^*$ to $B^*$ and, by contraposition, suppose that $f$ is not a strongly bifix.

If $f$ is not strongly prefix, let $a$ and $b$ be two different letters in $A$ such that $f(a)$ and $f(b)$ starts
with the same letter $x$. Let $\alpha$ and $\beta$ be the words such that $f(a)=x\alpha$ and
$f(b)=x\beta$.
The image of the overlap-free word $aab$ by $f$ contains the overlap
$x\alpha x\alpha x$:  $f$ is not overlap-free.

If $f$ is not strongly suffix, on the same way, we get that the image of a word of the form $baa$
contains an overlap.
\end{proof}

Given a morphism $f$ on $A$, the \textit{mirror morphism} $\tilde{f}$
of $f$ is defined for all words $w$ in $A^+$, by
$\tilde{f}(w)=\widetilde{f(\tilde{w})}$. 
In particular, $\tilde{f}(a)=\widetilde{f(a)}$ for every letter $a$ in $A$.

Let us recall that a word $w$ is overlap-free (resp. $k$-power-free)
if and only if $\tilde{w}$ is overlap-free (resp. $k$-power-free).
As a direct consequence, a morphism $f$ is overlap-free (resp. $k$-power-free) if and only if 
$\tilde{f}$ is overlap-free (resp. $k$-power-free).

Proofs of the three following lemmas are left to the reader.

\begin{lemma}
\label{LemBif1}
Let $f$ be a bifix morphism on $A^*$ and 
let $u$, $v$, $w$, and $t$ be words in $A^*$.\\
The equality $f(u)=f(v)p$ 
where $p$ is a prefix of $f(w)$
implies $u=vw'$ for a prefix $w'$ of $w$ such that $f(w')=p$.\\
Symetrically, the equality $f(u)=sf(v)$ 
where $s$ is suffix of $f(t)$
implies $u=t'v$ for a suffix $t'$ of $t$ such that $f(t')=s$.

\end{lemma}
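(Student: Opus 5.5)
The first assertion is proved by induction on $|u|+|v|$, using only that $f$ is bifix and hence non-erasing (Remark~\ref{Rem1bif}). The second assertion then follows by symmetry: apply the first to the mirror morphism $\tilde{f}$, which is bifix exactly when $f$ is, since the prefix and suffix conditions are exchanged.

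Assume $f(u)=f(v)p$ with $p$ a prefix of $f(w)$. If $v=\varepsilon$, the equality reads $f(u)=p$, a prefix of $f(w)$, and the claim becomes $u=w'$ for a prefix $w'$ of $w$ with $f(w')=p$. So the base case, and in fact the engine of the whole proof, is the following statement:
\begin{center}
if $f(u)$ is a prefix of $f(w)$, then $u$ is a prefix of $w$.
\end{center}
I will prove this by induction on $|u|$. The case $u=\varepsilon$ is trivial. Otherwise write $u=au_1$ with $a\in A$. Since $f(u)\neq\varepsilon$ and $f(u)$ is a prefix of $f(w)$, the word $w$ is non-empty; write $w=bw_1$ with $b\in A$. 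Then $f(a)$ and $f(b)$ are both prefixes of $f(w)$, so one of them is a prefix of the other. Since $f$ is prefix, this forces $a=b$. Cancelling $f(a)$ on the left shows that $f(u_1)$ is a prefix of $f(w_1)$, and the induction hypothesis gives that $u_1$ is a prefix of $w_1$. Hence $u$ is a prefix of $w$, and taking $w'=u$ we get $f(w')=p$.

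For $v\neq\varepsilon$, write $v=bv_1$ with $b\in A$. The word $f(u)$ has $f(b)$ as a prefix, and in particular $f(u)\neq\varepsilon$. So $u=au_1$ with $a\in A$, and $f(a)$ and $f(b)$ are both prefixes of $f(u)$, so one is a prefix of the other. The prefix property again gives $a=b$. Cancelling $f(a)$ yields $f(u_1)=f(v_1)p$, and the induction hypothesis gives $u_1=v_1w'$ with $w'$ a prefix of $w$ and $f(w')=p$. Therefore $u=vw'$.

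The only delicate point is comparing the first letters. When $f(a)$ and $f(b)$ are both prefixes of a common word, one must be a prefix of the other. Because $f$ is non-erasing, the prefix property (which forbids $f(a)$ from being a prefix of $f(b)$ for $a\neq b$, including the case $f(a)=f(b)$) then forces $a=b$. Everything else is routine left cancellation in the free monoid.
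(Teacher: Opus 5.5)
Your proof is correct. The paper gives no proof of this lemma (it is explicitly left to the reader), and your argument is the standard one it implicitly relies on: left cancellation in the prefix code $f(A)$, where the first letters must coincide because one image is a prefix of the other, plus the mirror morphism $\tilde{f}$ for the suffix half.

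One small citation fix: that a bifix morphism is non-erasing comes from the paper's sentence ``A prefix (resp.\ suffix) morphism is non-erasing'' (which uses $\card{A}\geq 2$). It does not come from Remark~\ref{Rem1bif}, which only says that strongly bifix implies bifix.
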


%

%

\begin{lemma}
\label{1pref}
Let $f$ be a strongly prefix morphism on $A^*$,
let $u$ and $v$ be words in $A^*$,
and let $a$ and $b$ be letters in $A$.
Furthermore,
let $p_1$ (resp. $p_2$) be a prefix of $f(a)$ (resp. of $f(b)$).
If $p_1p_2 \neq \varepsilon$, $(p_1;p_2) \neq (\varepsilon;f(b))$ and if $(p_1;p_2) \neq (f(a);\varepsilon)$ 
then
the equality $f(u)p_1=f(v)p_2$ implies $u=v$ and $a=b$.

\end{lemma}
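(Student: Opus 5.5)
Since a strongly prefix morphism is non-erasing and the images of distinct letters begin with distinct letters, I will argue by induction on $|u|+|v|$, peeling off one letter-image at a time from the left of $f(u)p_1=f(v)p_2$.

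\emph{Base case.} Suppose first that $u=\varepsilon$, so that $p_1=f(v)p_2$. If $v\neq\varepsilon$, write $v=cv'$ with $c\in A$. Then $f(c)f(v')p_2$ is a prefix of $f(a)$.
\begin{itemize}
\item If $c\neq a$, the non-empty word $f(c)$ starts with a letter different from the first letter of $f(a)$, which is a contradiction.
\item If $c=a$, then $f(a)$ is a prefix of $p_1$, hence $p_1=f(a)$, $v'=\varepsilon$ (non-erasing) and $p_2=\varepsilon$. This is exactly the excluded pair $(f(a);\varepsilon)$.
\end{itemize}
So $v=\varepsilon$ and $p_1=p_2$, which is non-empty since $p_1p_2\neq\varepsilon$. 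Both words are then non-empty prefixes of $f(a)$ and of $f(b)$ respectively, so $f(a)$ and $f(b)$ share their first letter. Strong prefixity gives $a=b$, and $u=v=\varepsilon$. The case $v=\varepsilon$, $u\neq\varepsilon$ is symmetric and uses the excluded pair $(\varepsilon;f(b))$.

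\emph{Inductive step.} Assume $u=cu'$ and $v=dv'$ are both non-empty. The two sides of the equality begin with the first letters of $f(c)$ and $f(d)$ respectively, so $c=d$ by strong prefixity. Cancelling $f(c)$ on the left yields $f(u')p_1=f(v')p_2$, with the same $p_1,p_2,a,b$, so the hypotheses are preserved. The induction hypothesis then gives $u'=v'$ and $a=b$, hence $u=v$.

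The only delicate point is the base case, specifically checking that the three excluded configurations are exactly what is needed to rule out one side of the equality absorbing a whole image $f(a)$ or $f(b)$. The rest is routine cancellation.
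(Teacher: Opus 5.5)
Your proof is correct: you cancel equal first letter-images from the left using strong prefixity, then use non-erasure together with the excluded pairs $(f(a);\varepsilon)$ and $(\varepsilon;f(b))$ to settle the case where $u$ or $v$ runs out first. The paper states this lemma without proof (it is ``left to the reader''), and your argument supplies the routine verification it intends, with nothing missing.
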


\begin{lemma}
\label{1suff}
Let $f$ be a strongly suffix morphism on $A^*$,
let $u$ and $v$ be words in $A^*$, 
and let $a$ and $b$ be letters in $A$.
Furthermore,
let $s_1$ (resp. $s_2$) be a suffix of $f(a)$ (resp. of $f(b)$).
If $s_1s_2 \neq \varepsilon$, $(s_1;s_2) \neq (\varepsilon;f(b))$ and if $(s_1;s_2) \neq (f(a);\varepsilon)$ 
then
the equality $s_1f(u)=s_2f(v)$ implies $u=v$ and $a=b$.

\end{lemma}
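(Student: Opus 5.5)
We prove Lemma~\ref{1suff}; the structure is the same as for Lemma~\ref{1pref}, and the quickest route is to reduce to it by mirror images. If $f$ is strongly suffix, then $\tilde{f}$ is strongly prefix: the last letter of $f(a)$ is the first letter of $\tilde{f}(a)=\widetilde{f(a)}$. Reversing the equality $s_1f(u)=s_2f(v)$ gives $\tilde{f}(\tilde{u})\tilde{s_1}=\tilde{f}(\tilde{v})\tilde{s_2}$. Here $\tilde{s_1}$ is a prefix of $\tilde{f}(a)$ and $\tilde{s_2}$ is a prefix of $\tilde{f}(b)$. The three side conditions transfer verbatim: $\tilde{s_1}\tilde{s_2}\neq\varepsilon$, $(\tilde{s_1};\tilde{s_2})\neq(\varepsilon;\tilde{f}(b))$, and $(\tilde{s_1};\tilde{s_2})\neq(\tilde{f}(a);\varepsilon)$. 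Lemma~\ref{1pref} then yields $\tilde{u}=\tilde{v}$ and $a=b$, hence $u=v$.

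Since Lemma~\ref{1pref} is itself left to the reader, here is a direct proof of the suffix statement, by induction on $|u|+|v|$. We compare the two sides of $s_1f(u)=s_2f(v)$ from the right.

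\emph{Case $u$ and $v$ both non-empty.} Write $u=u'x$ and $v=v'y$ with $x,y\in A$. The common last letter of both sides is the last letter of $f(x)$ and of $f(y)$, both non-empty since $f$ is non-erasing. Because $f$ is strongly suffix, $x=y$. Cancelling $f(x)$ gives $s_1f(u')=s_2f(v')$ with the same $s_1,s_2$, and the induction hypothesis gives $u'=v'$ and $a=b$.

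\emph{Case $u$ or $v$ empty.} This is where the hypotheses on $(s_1;s_2)$ are used, and it is the main point to handle carefully. Say $v=\varepsilon$, so $s_1f(u)=s_2$.
\begin{itemize}
\item If $u=\varepsilon$, then $s_1=s_2$, which is non-empty since $s_1s_2\neq\varepsilon$. Both are suffixes of their images, so $f(a)$ and $f(b)$ end with the same letter, giving $a=b$; also $u=v$.
\item If $u\neq\varepsilon$, write $u=u'x$. Then $s_2$ ends with $f(x)$, so $f(x)$ is a suffix of $f(b)$. Since $f$ is strongly suffix and $f(x)$ is non-empty, $x=b$. Comparing lengths, $s_2$ is a suffix of $f(b)$ containing $f(b)$ as a suffix, so $s_2=f(b)$, and then $s_1f(u')=\varepsilon$. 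This forces $s_1=\varepsilon$ and $u'=\varepsilon$, so $(s_1;s_2)=(\varepsilon;f(b))$, which is excluded.
\end{itemize}
The case $u=\varepsilon$, $v\neq\varepsilon$ is symmetric and is excluded by $(s_1;s_2)\neq(f(a);\varepsilon)$.

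These cases cover every possibility, so $u=v$ and $a=b$. The only care needed is to check that, after each cancellation in the induction, the same pair $(s_1;s_2)$ still satisfies the hypotheses. It does, since the pair is unchanged.
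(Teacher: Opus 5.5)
Your proof is correct. The paper gives no argument for this lemma, since it is one of the three lemmas ``left to the reader''. Both of your routes are the routine verification intended: the mirror reduction to Lemma~\ref{1pref} via $\tilde{f}$, which matches how the paper treats symmetric statements, and the self-contained induction that cancels images of letters from the right, where the three side conditions on $(s_1;s_2)$ are needed only in the base cases with $u$ or $v$ empty. The direct induction is the more useful of the two, because Lemma~\ref{1pref} is itself unproved in the paper.
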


\begin{definition}

A morphism $f$ from $A^*$ to $B^*$ is a \textit{ps-morphism} (Ker\"anen \cite{Ker1986} called 
$f$ a ps-code)
if and only if the equalities \\ \centerline{$f(a) = ps$, $f(b) = ps'$ and
$f(c) = p's$} with $a,b,c \in A$ (possibly $c = b$) and $p$, $s$, $p'$,
$s' \in B^*$ imply $b = a$ or $c = a$.

\end{definition}

%
%
%
%

\begin{rqe}\label{Rem1bifPS}

By definition, a strongly bifix morphism is a ps-morphism.
\end{rqe}

\begin{lemma}{\rm \cite{Ker1986,Lec1985}}
\label{lemmeSPKPS}
If $f$ is not a ps-morphism then $f$ is not a $k$-power-free morphism for all integers $k \geq 2$.
\end{lemma}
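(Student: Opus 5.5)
We prove the contrapositive by constructing explicit words. Suppose $f$ is not a ps-morphism. Then there are letters $a,b,c\in A$ with $b\neq a$ and $c\neq a$ (possibly $c=b$) and words $p,s,p',s'\in B^*$ such that
\[ f(a)=ps,\qquad f(b)=ps',\qquad f(c)=p's. \]
Fix $k\geq 2$. The aim is a $k$-power-free word $w\in A^+$ such that $f(w)$ contains a $k$-power. First, if $f$ is erasing and $f\neq\epsilon$, then $f$ is not $k$-power-free by the argument given earlier in the paper (the word $aba^{k-1}$). So we may assume $f$ is non-erasing; in particular $ps\neq\varepsilon$.

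The natural candidate is
\[ w = c\,a^{k-1}\,b, \]
which is $k$-power-free. To see this, note that $a$ occurs only in the single run $a^{k-1}$, because $b\neq a$ and $c\neq a$. Take any factor $v^k$ with $v\neq\varepsilon$. If $v$ contains $a$, then $v^k$ would contain at least $k$ occurrences of $a$ in at least two distinct runs, which is impossible. If $v$ contains no $a$, then $v^k$ has no $a$, so $v^k$ is a factor of $c$ or of $b$; but then $|v^k|\leq 1<k$. Hence $w$ is $k$-power-free.

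Now compute the image:
\[ f(w)=p's\,(ps)^{k-1}\,ps' = p'\,(sp)^{k}\,s'. \]
It contains $(sp)^k$, and $|sp|=|ps|=|f(a)|>0$. So $f(w)$ contains a $k$-power and $f$ is not $k$-power-free.

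The only delicate point is the combinatorial check that $w$ is $k$-power-free, in particular when $c=b$. In that case $w=ba^{k-1}b$, and the run argument above still applies, since every letter other than $a$ occurs in $w$ as an isolated occurrence. Everything else is a direct substitution, and the argument works uniformly for all $k\geq 2$.
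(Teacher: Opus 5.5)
Your proof is correct. The paper gives no proof of this lemma; it quotes it from Ker\"anen and Leconte. The closest argument in the paper is the proof of Lemma~\ref{LemOverPS}. That proof splits into the cases $p\neq\varepsilon$ (then $f(aab)$ contains $pspsp$) and $s\neq\varepsilon$ (then $f(caa)$ contains $spsps$), and treats $f(a)=\varepsilon$ separately.

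Your single test word $ca^{k-1}b$ removes that case split. The regrouping $p's(ps)^{k-1}ps'=p'(sp)^ks'$ gives a $k$-power of period $|f(a)|$ whichever of $p$, $s$ is non-empty, and it does so uniformly for all $k\geq 2$. For $k=3$, the overlap-free word $caab$ would likewise give a one-case proof of Lemma~\ref{LemOverPS}.

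Two small remarks.
\begin{itemize}
\item In the freeness check, the phrase ``in at least two distinct runs'' is false when $v$ is a power of $a$, but you do not need it. Since $b\neq a$ and $c\neq a$, the word $w$ contains exactly $k-1$ occurrences of $a$, while $v^k$ contains at least $k$.
\item You should say explicitly that you rely on the paper's standing convention $f\neq\epsilon$. The empty morphism is not a ps-morphism (take $p=s=p'=s'=\varepsilon$ and $c=b\neq a$), yet it is trivially $k$-power-free. So the statement is only true under that convention.
\end{itemize}
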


\begin{lemma}\label{LemOverPS}
An overlap-free morphism ($\neq \epsilon)$ is  a ps-morphism.
\end{lemma}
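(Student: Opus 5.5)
The quickest route is to combine Lemma~\ref{LemAbifix} with Remark~\ref{Rem1bifPS}. Let $f \neq \epsilon$ be an overlap-free morphism on $A^*$. By Lemma~\ref{LemAbifix}, $f$ is strongly bifix. By Remark~\ref{Rem1bifPS}, every strongly bifix morphism is a ps-morphism, so $f$ is one. This is the whole argument. The hypothesis $f\neq\epsilon$ only guarantees that we are in the non-erasing setting where the definition of strongly bifix makes sense; Lemma~\ref{LemAbifix} does not actually use it, since its proof derives an overlap as soon as two images share a first or last letter.

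For completeness, I would also check the remark directly, since it is the only step not proved in the text. Suppose $f(a)=ps$, $f(b)=ps'$ and $f(c)=p's$. We must show $b=a$ or $c=a$. Since $f$ is non-erasing, $f(a)=ps\neq\varepsilon$, so $p\neq\varepsilon$ or $s\neq\varepsilon$.

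If $p\neq\varepsilon$, then $f(a)$ and $f(b)$ both begin with the first letter of $p$. Strong prefixity then forces $b=a$.

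If $s\neq\varepsilon$, then $f(a)$ and $f(c)$ both end with the last letter of $s$. Strong suffixity then forces $c=a$.

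In either case the ps-condition holds.

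There is no real obstacle. The only points to watch are that the decomposition $f(a)=ps$ always yields a non-empty $p$ or $s$ (by non-erasingness), and that this nonempty part is shared with $f(b)$ or $f(c)$ in the correct position, as a prefix or as a suffix. As a remark, one could instead argue by contraposition through Lemma~\ref{lemmeSPKPS}, but that lemma concerns $k$-power-freeness rather than overlap-freeness, so the route through Lemma~\ref{LemAbifix} is the natural one.
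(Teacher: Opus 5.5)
Your proof is correct, but it is organised differently from the paper's. The paper does not invoke Lemma~\ref{LemAbifix}. It argues directly by contraposition: given $f(a)=ps$, $f(b)=ps'$, $f(c)=p's$ with $b\neq a$ and $c\neq a$, it exhibits an overlap-free word whose image contains an overlap in each case. Namely, $f(aab)$ contains $pspsp$ if $p\neq\varepsilon$, $f(caa)$ contains $spsps$ if $s\neq\varepsilon$, and if $f(a)=\varepsilon$ then $f(dadaad)=f(d)^3$ for a letter $d$ with $f(d)\neq\varepsilon$. You instead pass through the intermediate property of being strongly bifix. The test words $aab$ and $caa$ are the same ones used in the proof of Lemma~\ref{LemAbifix}. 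Your route is shorter given what is already proved, and it shows that the ps-property is a weak consequence of a stronger one. The paper's version is self-contained and handles the erasing case explicitly.

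One comment of yours is inaccurate. Lemma~\ref{LemAbifix} does depend on $f\neq\epsilon$: $\epsilon$ is overlap-free but erasing, hence not strongly bifix by definition. The shared-letter argument in the proof of that lemma says nothing about erasing letters. Non-erasingness comes from the paper's earlier observation that an erasing $f\neq\epsilon$ maps the overlap-free word $abaa$ (with $f(a)\neq\varepsilon=f(b)$) onto $f(a)^3$. Your direct check relies on exactly this when you exclude $p=s=\varepsilon$ with ``$f$ is non-erasing''; that is the case the paper treats with $dadaad$. So your argument stands, but $f\neq\epsilon$ is doing real work there, not merely making the definition meaningful.
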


\begin{proof}

Let $f$ be a morphism from $A^*$ to $B^*$ and, by contraposition, suppose that $f$ is not a ps-morphism.

Let $a$, $b$, and $c$ be letters of $A$ with $b \neq a$ and $c \neq a$
such that $f(a) = ps$, $f(b) = ps'$, and $f(c) = p's$ for some words $p$, $p'$, $s$, and $s'$.

If $p \neq \varepsilon$, then $f(aab)$ contains $pspsp$. 
If $s \neq \varepsilon$, then $f(caa)$ contains $spsps$. 
In these both cases, we get that $f$ is not an overlap-free morphism.

If $f(a) = \varepsilon$, since $f \neq \epsilon$, let $d$ be a letter of $A$ such that $f(d) \neq \varepsilon$. 
Therefore, we get that $f(dadaad) = f(d)^3$, meaning that $f$ is not overlap-free.
\end{proof}

Let us assume that $f(\overline{w})=pu^ks$ for a factor $\overline{w}$ of a word $w$,
a non-empty word $u$ and an integer $k \geq 2$,
Let us also assume that $\overline{w}$ contains a factor $w_0$ such that $|f(w_0)|=|u|$.
If $f$ is a ps-morphism, Lemma~\ref{synckp} states
that $\overline{w}$ necessarily contains a $k$-power $w'^k$ such 
that $f(w')$ is a conjugated word of $u$.
We will say that $f(w)$ contains a synchronised $k$-power $u^k$
or that $f(w)$ and $u^k$ are synchronised. More precisely:

\begin{defin}\label{defsync}
Let $k \geq 2$ be an integer.
Let $f$ be a morphism from $A^*$ to $B^*$, $w$ be a word in $A^+$, and $u$ be a word in $B^+$
such that $f(w)$ contains the $k$-power $u^k$.
Let $\overline{w}$ be a shortest factor of $w$ whose image by $f$ contains $u^k$,
i.e., $f(\overline{w})=pu^ks$ with $|p|<|f(\sub{\overline{w}}{1})|$ and 
$|s|<|f(\sub{\overline{w}}{|\overline{w}|})|$.

We say that $f(w)$ and $u^k$ are synchronised if there exist
three words $w_0$, $w_1$, and $w_2$ such that $|f(w_0)|=|u|$ and $\overline{w}=w_1w_0w_2$
with $p=\varepsilon$  if $w_1=\varepsilon$, and
$s=\varepsilon$ if $w_2=\varepsilon$.

\end{defin}

\begin{lemma}\label{synckp}{\rm \cite{Wla2015a}}

Let $k \geq 2$ be an integer.
If $f$ is a ps-morphism and if $f(w)$ contains a synchronised $k$-power  then 
$w$ contains a $k$-power.

\end{lemma}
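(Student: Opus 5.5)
Lemma~\ref{synckp} is cited from \cite{Wla2015a}, but I would reprove it by a direct desynchronisation argument.

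\textbf{Setup.} Fix $\overline{w}$, $p$, $s$, $w_0$, $w_1$, $w_2$ as in Definition~\ref{defsync}, so that $f(\overline{w})=pu^ks$ and $|f(w_0)|=|u|$. The synchronisation conditions say that $p$ is a proper suffix of $f(\sub{\overline{w}}{1})$, and $p=\varepsilon$ when $w_1=\varepsilon$; symmetrically for $s$. Since $|f(w_0)|=|u|$, the factor $f(w_0)$ of $pu^ks$ sits inside $u^k$ up to the boundary pieces $p,s$. Writing positions modulo $|u|$, $f(w_0)$ begins at some position $i$, so $f(w_0)$ is the conjugate $u'=\sub{(uu)}{i..i+|u|-1}$ of $u$ (Remark~\ref{RemFacUk} in the case $k\ge 3$; for $k=2$ one uses $p,s$ to place it).

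\textbf{Propagation.} The goal is to show that $\overline{w}$ factors, up to its boundary letters, as $x\,w'^{k}\,y$ with $f(w')=u'$. I would propagate the cut point of $w_0$ to the left and to the right by steps of $|u|$. At each position $i+j|u|$ inside $u^k$, I claim there is a cut between images of letters of $\overline{w}$. Suppose that at position $i+j|u|$ there is a cut, but at $i+(j+1)|u|$ the cut falls strictly inside the image $f(c)$ of some letter $c$. Then the factor $f(w_0)$, translated by $|u|$, is the image of a factor whose last letter is $c$, cut as $f(c)=qr$ with $q,r\neq\varepsilon$. Comparing the two occurrences of $u'$ gives an equality $f(z)\,q=f(w_0)$, or its suffix analogue. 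From here I would derive a contradiction in one of two ways:
\begin{itemize}
\item via Lemma~\ref{1pref} and Lemma~\ref{1suff}, if $f$ is strongly bifix;
\item in the general ps setting, via Lemma~\ref{LemBif1} together with the ps condition. Two letter images sharing a prefix $q$, with a third image sharing the suffix $r$, force equal letters.
\end{itemize}
Iterating this gives cuts at every position $i+j|u|$ that lies within the synchronised zone. Consecutive blocks between these cuts are images of factors $w'_j$ with $f(w'_j)=u'$.

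\textbf{Conclusion.} Injectivity of $f$ on these blocks should force all the $w'_j$ to be equal, which gives $w'^{k-1}$ or $w'^{k}$. A ps-morphism is injective on words of this type via Lemma~\ref{LemBif1} in the bifix case; in general I would use the ps condition letter by letter from the left.

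To recover the full $k$-th power rather than only $k-1$ full blocks, I use the boundary condition. If $p\neq\varepsilon$, the leftover left part is a proper suffix $p$ of $f(\sub{\overline{w}}{1})$, and it matches the corresponding suffix of $f(w')$. Applying the same ps argument to the last letter of $w'$ and $\sub{\overline{w}}{1}$ shows these letters coincide. The same reasoning on the right then completes the $k$-th copy. When $p=\varepsilon$ (that is, $w_1=\varepsilon$) no completion is needed on the left, and likewise on the right.

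\textbf{Main obstacle.} The hard step is the propagation. Using only the ps property, rather than strong bifixity, I must exclude a cut falling strictly inside a letter image, and that requires carefully producing the triple $f(a)=ps$, $f(b)=ps'$, $f(c)=p's$. The case $k=2$ is the most delicate, because there is little room for translation. There I would rely directly on the positions of $w_0$ given by the synchronisation hypothesis.
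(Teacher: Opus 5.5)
The genuine gap is in your completion step, not in the propagation.

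First, $p$ is a proper \emph{prefix} of $f(\sub{\overline{w}}{1})$, not a suffix, and $p=\varepsilon$ does not mean $w_1=\varepsilon$. Write $f(\sub{\overline{w}}{1})=p\ell$ with $\ell\neq\varepsilon$. Then the left leftover of $u^k$ before the first aligned cut is $L=\ell\,f(\sub{\overline{w}}{2..m})$, which is a suffix of $f(w')$. Suffix-decoding $f(w')$ along $L$ only produces some letter $g$ of $w'$ whose image ends with $\ell$; in general $g$ is not the last letter of $w'$. Two letter images sharing a suffix are not forced to be equal by the ps condition. That condition only applies when one image shares a prefix with a second image and a suffix with a third.

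Your one-sided claim is in fact false. The morphism $f(a)=cd$, $f(b)=ed$ is a ps-morphism. Take $\overline{w}=abb$: then $f(\overline{w})=c(de)^2d$, synchronised by $w_0=b$, with $w'=b$, yet $\sub{\overline{w}}{1}=a\neq b$. Only the right side completes here. So neither side can be settled on its own, and ``the same reasoning on the right'' cannot give both.

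The missing idea is to use both leftovers simultaneously. The aligned cuts give $f(w')=RL$, where $R$ is the right leftover and $|R|+|L|=|u|$. Decode $f(w')$ from the left along $R$ and from the right along $L$. Both decodings reach the same letter $g$ straddling the junction, so $w'=\alpha g\beta$ with $f(g)=r\ell$, $f(\sub{\overline{w}}{|\overline{w}|})=rs$, and $\overline{w}=\sub{\overline{w}}{1}\,\beta(\alpha g\beta)^{k-1}\alpha\,\sub{\overline{w}}{|\overline{w}|}$. When $p,s\neq\varepsilon$, all of $p,s,r,\ell$ are non-empty, so this is exactly a ps triple. It gives $g=\sub{\overline{w}}{1}$ or $g=\sub{\overline{w}}{|\overline{w}|}$. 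Hence $\overline{w}$ starts with $(g\beta\alpha)^k$ or ends with $(\beta\alpha g)^k$.

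The paper only cites this lemma, but this disjunction is what Remark~\ref{rsynckp} records. It is also the same two-sided matching as in the internal case of Lemma~\ref{LemSyncOverlap}, where one finds a letter $x$ with $f(x)=pas$. Note that the root of the $k$-power is a conjugate of $w'$, so your target $xw'^ky$ must be relaxed.

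Conversely, the step you flag as the main obstacle needs no ps argument. A ps-morphism is already bifix: if $f(a)$ were a prefix of $f(b)$ with $a\neq b$, take $p=f(a)$, $s=\varepsilon$, $c=b$, $p'=f(b)$ in the definition. Prefix/suffix decoding then propagates cuts by $|u|$ throughout $u^k$. For $k=2$ there is nothing to propagate. When $p=\varepsilon$, the propagated cuts reach the end of $u^k$, which forces $s=\varepsilon$ and makes $\overline{w}$ itself a $k$-power.
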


\begin{rqe}\label{rsynckp}{\rm \cite{Wla2015a}}

More precisely, the word $\overline{w}$ starts or ends with a $k$-power 
whose image by $f$ is a conjugated of the synchronised $k$-power.

\end{rqe}

\begin{defin}\label{defsyncOverlap}
Let $f$ be a morphism from $A^*$ to $B^*$, $w$ be a word in $A^+$, $a$ be a letter in $A$ and 
$u$ be a word in $B^*$ such that $f(w)$ contains the overlap $auaua$.
Let $\overline{w}$ be a shortest factor of $w$ whose image by $f$ contains $auaua$,
i.e., $f(\overline{w})=p auaua s$ with $|p|<|f(\sub{\overline{w}}{1})|$ and 
$|s|<|f(\sub{\overline{w}}{|\overline{w}|})|$.

We say that $f(w)$ and $auaua$ are synchronised if there exist
three words $w_0$, $w_1$, and $w_2$ such that $|f(w_0)|=|au|$ and $\overline{w}=w_1w_0w_2$
with $p=\varepsilon$  if $w_1=\varepsilon$, and
$s=\varepsilon$ if $w_2=\varepsilon$.
\end{defin}

\begin{rqe}
Let us note that, $f(w)$ and $auaua$ are synchronised only if 
$f(w)$ and $(au)^2$ are synchronised. 

\end{rqe}

\begin{lemma}\label{LemSyncOverlap}

If $f$ is a strongly bifix morphism and if $f(w)$ contains a synchronised overlap then 
$w$ contains an overlap.
More precisely, the word $\overline{w}$ (see Definition~\ref{defsyncOverlap}) starts or ends with an overlap 
whose image by $f$ is a conjugated word of the synchronised overlap.

\end{lemma}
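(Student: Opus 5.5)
The plan is to reduce the statement to Lemma~\ref{synckp} (with $k=2$) and then upgrade the square it produces into an overlap. Since $f(w)$ and $auaua$ are synchronised, $f(w)$ and $(au)^2$ are synchronised (remark following Definition~\ref{defsyncOverlap}). A strongly bifix morphism is a ps-morphism (Remark~\ref{Rem1bifPS}), so Lemma~\ref{synckp} and Remark~\ref{rsynckp} apply. They give that $\overline{w}$ starts or ends with a square $w'w'$ such that $f(w')$ is a conjugate of $au$. By mirror symmetry (replace $f$ by $\tilde f$ and $w$ by $\tilde w$, using Remark~\ref{RemMirOver}), we may assume $\overline{w}$ starts with $w'w'$. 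What remains is to show that the first letter of $w'$ also follows this square, so that $w'w'\,\sub{w'}{1}$ is an overlap factor of $\overline{w}$.

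First I fix the exact position of $w'w'$. By synchronisation, $f(\overline{w})=p\,auaua\,s$, and in the case where $\overline{w}$ starts with $w'w'$ the synchronisation point gives $f(w'w')=p\,(au)^2\,p^{-1}$-shifted. More precisely, $f(w')=p\,\beta$ where $au=\beta'\ldots$; writing $au=xy$ with $|x|=|p|$ after a conjugation, we get $f(w'w')=(yx)^2$ up to the offset, with $p$ a proper prefix of $f(\sub{\overline{w}}{1})$. Thus $f(\overline{w})=f(w')f(w')\,z$, where $z$ begins with the leftover part of $auaua$ after the two periods. Tracking lengths, the part of $auaua$ not covered by $f(w'w')$ is exactly one letter $a$ when $p=\varepsilon$. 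When $p\neq\varepsilon$, the covered part is $(au)^2$ shifted by $|p|$, so the overlap continues past $f(w'w')$ by the word $x\,\sub{u}{?}$ of length $|p|+1$, which is the prefix of $f(w')$ of length $|p|+1$.

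So in all cases $f(\overline{w})=f(w')f(w')\,q\,s$, where $q$ is a nonempty prefix of $f(w')$, hence its first letter is the first letter of $f(\sub{w'}{1})$. Write $\overline{w}=w'w'\,c\,w''$ with $c$ a letter; this is possible since $q\neq\varepsilon$ and $\overline{w}$ is the shortest factor. Then $f(c)$ and $f(\sub{w'}{1})$ start with the same letter, and since $f$ is strongly prefix, Remark~\ref{Rem2bif} yields $c=\sub{w'}{1}$. Hence $\overline{w}$ starts with the overlap $w'w'\sub{w'}{1}$, and its image $f(w')f(w')f(\sub{w'}{1})$ contains the synchronised overlap, which is read in conjugated form (exactly when $p=\varepsilon$).

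The main obstacle is the bookkeeping in the second step. We must check that the letters of $auaua$ beyond $f(w'w')$ really lie at the start of the next block $f(c)$, and not in the remainder $s$. In the case where $\overline{w}$ instead ends with the square, we get the symmetric situation, where the extra letter precedes it. I would handle that case with strong suffixity through the mirror argument, carefully using the minimality conditions $|p|<|f(\sub{\overline{w}}{1})|$ and $|s|<|f(\sub{\overline{w}}{|\overline{w}|})|$ to guarantee that the extra letter exists inside $\overline{w}$.
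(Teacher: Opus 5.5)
Your argument is correct, but it follows a different route from the paper. You go through the square case. The synchronised overlap yields a synchronised square $(au)^2$. Lemma~\ref{synckp} and Remark~\ref{rsynckp} with $k=2$ then place a square $w'w'$ with $|f(w')|=|au|$ at a border of $\overline{w}$. Finally, strong prefixity (resp.\ suffixity) supplies the one missing letter.

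The paper never invokes Lemma~\ref{synckp}. It splits directly on the position of $w_0$ in $\overline{w}=w_1w_0w_2$. Its case $w_1=\varepsilon$ is essentially your extension step with $p=\varepsilon$, and $w_2=\varepsilon$ is the mirror case. The internal case is solved by explicit equations ($f(w_1)=pau_1$, $f(w_0)=u_2au_1$, $f(w_2)=u_2as$) together with Lemmas~\ref{LemBif1}, \ref{1pref} and~\ref{1suff}. This shows that $\overline{w}=xw_1'w_0'xw_1'w_0'x$ is itself the overlap.

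What each buys: your route is shorter and reuses the square case. The paper's is self-contained: it depends neither on the external result of \cite{Wla2015a} nor on the unproved remark following Definition~\ref{defsyncOverlap}. That remark does hold for strongly bifix $f$, but when $w_0$ is a suffix of $\overline{w}$ one has to slide $w_0$ one letter to the left using strong suffixity.

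Two points need tightening.

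\emph{The two factors called $\overline{w}$.} Remark~\ref{rsynckp} speaks about the shortest factor $\overline{w}_2$ whose image contains $(au)^2$. This is either $\overline{w}$ or $\overline{w}$ minus its last letter. So your mirror ``without loss of generality'' is not literal: mirroring turns the prefix square $(au)^2$ into the suffix square of the reversed overlap. It is repaired as follows.
\begin{enumerate}
\item If $w'w'$ is a prefix of $\overline{w}_2$, it is a prefix of $\overline{w}$.
\item If $w'w'$ is a suffix of $\overline{w}_2\neq\overline{w}$, then $f(\overline{w}_2)=p(au)^2$ and $f(w'w')=(au)^2$. The condition $|p|<|f(\sub{\overline{w}}{1})|$ then forces $p=\varepsilon$ and $\overline{w}_2=w'w'$. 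There is no preceding letter here, contrary to your ``extra letter precedes it'', and you are back in the prefix case.
\item If $w'w'$ is a suffix of $\overline{w}$ itself, the mirror of your argument applies because $auaua=a(ua)^2$: the $|s|+1\leq|au|$ letters preceding $f(w'w')$ are the suffix of $f(w')$ of that length.
\end{enumerate}

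\emph{Your length bookkeeping.} The garbled ``$p(au)^2p^{-1}$-shifted'' paragraph should simply say the following. The word $f(w')f(w')$ is a prefix of $p\,auaua\,s$ with period $|au|$, and $|p|<|f(\sub{\overline{w}}{1})|\leq|au|$. Hence $p$ is the suffix of $au$ of length $|p|$, so $p\,auaua$ has period $|au|$. Therefore the $|p|+1\leq|au|$ letters after $f(w'w')$ form the prefix of $f(w')$ of that length.
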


\begin{proof}

Let $a$ be the letter and let $u$ be the word such that $f(w)$ and $auaua$ are synchronised,
let $\overline{w}$ be the shortest factor of $w$ whose image by $f$ contains $auaua$, and 
let $w_0$ be a factor of $\overline{w}$ such that $|f(w_0)|=|au|$.

There exist 
a prefix $p$ of
$f(\sub{\overline{w}}{1})$ and a suffix $s$
of $f(\sub{\overline{w}}{|\overline{w}|})$ 
such that $f(\overline{w})=pauauas$ with 
$|p| < |f(\sub{\overline{w}}{1}|$ and $|s| < |f(\sub{\overline{w}}{|\overline{w}|}|$.
Let $w_1$ and $w_2$ be the words  such that $w=w_1w_0w_2$.


If $w_1=\varepsilon$, i.e., $\overline{w}$ starts with $w_0$, then $p=\varepsilon$ and 
$f(\overline{w})$ starts with $au$. 
It implies that $au=f(w_0)$. By Lemma~\ref{LemBif1}, it follows that $\overline{w}$ starts with $w_0^2$.
Let $x$ be the first letter of $w_0$. Let $w_0'$ and $w'$ be the words such that $w_0=xw_0'$
and $w=xw_0'xw_0'w'$.
Since $f(w')$ starts with $a$ and since $f$ is strongly bifix, we get that $w'$ starts with $x$
(see Remark~\ref{Rem2bif})
and that $w$ starts with the overlap $xw_0'xw_0'x$.

If $w_2=\varepsilon$, i.e., $\overline{w}$ ends with $w_0$, then, in a similar way,
we obtain that $\overline{w}$ ends with $yw_0''yw_0''y$ where $y$ is the last letter 
of $w_0$ and $w_0''$ is the word such that $w_0=w_0''y$.

From now, let us assume that $w_1$ and $w_2$ are non-empty, i.e.,
$w_0$ is an internal factor of $\overline{w}$.
Since $|pa| \leq |f(\sub{w_1}{1}|$, $|as| \leq |f(\sub{w_2}{|w_2|}|$,
it implies that $f(w_0)$ is an internal factor of $uau$.
In particular, since $f(w_0)=|au|$,  it means that $f(w_0)$ and $au$ are conjugated words.
More precisely, there exist four words $u_1$, $u_2$, $u_3$ and $u_4$
such that
$f(w_1)=pau_1$, $f(w_2)=u_4as$, $f(w_0)=u_2au_3$ with $u=u_1u_2=u_3u_4$.

Since $|f(w_0)|=|u_2au_3|=|au|=|au_1u_2|$, we get that $u_1=u_3$ and $u_2=u_4$, that is,
$f(w_1)=pau_1$, $f(w_2)=u_2as$ and $f(w_0)=u_2au_1$.

Let $t$ be the greatest common suffix of $w_0$ and $w_1$.
Since $f$ is strongly bifix, we have $|f(t)| \geq |au_1|$.
Since $|p| \leq |f(\sub{w_1}{1}|$, by Lemma~\ref{1suff}, we get that $w_1=t$ is a suffix of $w_0$.
In particular, $u_2$ ends with $p$. There exist a word $w_0'$ and a word $u_2'$
such that $w_0=w_0'w_1$, $u_2=u_2'p$ and $f(w_0')=u_2'$.

Since $f(w_2)=u_2as=f(w_0')pas$ avec $|pa| \leq |f(\sub{w_1}{1}|$ and $|as| \leq |f(\sub{w_2}{|w_2|}|$
and since $f$ is strongly bifix, we get that 
there exist a letter $x$ such that $f(x)=pas$ (see Remark~\ref{Rem2bif}).
By Lemma~\ref{1pref}, it implies that $w_2=w_0'x$.
It also means that $w_1$ starts with $x$. So there exist a word $w_1'$ such that $w_1=xw_1'$.
It follows that $\overline{w}=xw_1'w_0'xw_1'w_0'x$ is an overlap
with $|f(xw_1'w_0')|=|f(w_1w_0')|=|au|$.
\end{proof}

\begin{lemma}\label{imagepure}{\rm \cite{Wla2015a}}

Let $k \geq 4$ be an integer.\\
The image of a pure $k$-power by a $k$-power-free morphism is also a pure $k$-power.

\end{lemma}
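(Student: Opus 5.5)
The statement is Lemma~\ref{imagepure}: for $k\geq 4$, if $f$ is $k$-power-free and $v^k$ is a pure $k$-power, then $f(v^k)=f(v)^k$ is a pure $k$-power. It is cited from \cite{Wla2015a}, but I would prove it directly as follows.

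First, $f$ is non-erasing and $f(v)\neq\varepsilon$. By Lemma~\ref{lemmeSPKPS}, $f$ is a ps-morphism. We must show that every proper factor of $f(v)^k$ is $k$-power-free. Suppose not. Then some proper factor of $f(v)^k$ contains a $k$-power, hence a pure $k$-power $u^k$, with $|u^k|<|f(v)^k|$ and so $|u|<|f(v)|$. Two cases arise according to the length of $u^k$.

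\emph{Long case:} $|u^k|\geq |u|+|f(v)|$. Then $u^k$ and a power of $f(v)$ share a factor of length at least $|u|+|f(v)|-\gcd$. Corollary~\ref{Kera} then gives $u$ and $f(v)$ as powers of conjugate primitive words $t_1t_2$ and $t_2t_1$. Since $|f(v)|>|u|$, the word $f(v)$ is not primitive, say $f(v)=(t_2t_1)^m$ with $m\geq 2$. Then $f(v^k)=(t_2t_1)^{mk}$. I would then use the primitivity of $v$ together with a synchronisation argument. In $f(v^{k-1}x)$, where $x$ is the first letter of $v$, the $k$-power $(t_2t_1)^k$ occurs at positions that are multiples of $|t_2t_1|$. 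Among the $m\geq 2$ possible shifts, one aligns a factor $w_0$ with $|f(w_0)|=|t_2t_1|$ in the sense of Definition~\ref{defsync}. Lemma~\ref{synckp} then yields a $k$-power in a proper factor of $v^k$, contradicting purity. The delicate point is checking the synchronisation condition. I would handle it by choosing the occurrence of $(t_2t_1)^k$ that starts at the beginning of $f(v)$ inside $f(v^k)$ but is not a prefix of $f(v^k)$. Alternatively, I would argue that $(t_2t_1)^{k}$ sits inside $f(v)^{\lceil k/m\rceil+1}$, and since $k\geq4$ and $m\geq2$ this is a proper factor of $f(v)^k$. Then $f(v^{\lceil k/m\rceil+1})$ contains a $k$-power while $v^{\lceil k/m\rceil+1}$ is $k$-power-free, contradicting that $f$ is $k$-power-free. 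This second route is cleaner, and I would use it.

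\emph{Short case:} $|u^k|<|u|+|f(v)|$. Then $|u^{k-1}|<|f(v)|$, so $u^k$ lies inside a factor of $f(v)^k$ of length less than $2|f(v)|$, namely inside $f(y)$ where $y$ is a factor of $v^k$ with $|y|\leq |v|+1$. Since $k\geq 4$, $y$ is a proper factor of $v^k$ and hence $k$-power-free. So $f(y)$ is $k$-power-free because $f$ is a $k$-power-free morphism, yet it contains $u^k$, a contradiction. The only care needed is the bound $|y|\leq|v|+1<|v^k|$, which holds as soon as $k\geq 2$.

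The main obstacle is the long case. The risk there is that $m$ is large relative to $k$, or that the containing factor $f(v)^{\lceil k/m\rceil+1}$ is not a proper part of $f(v)^k$. The condition $k\geq 4$ guarantees $\lceil k/m\rceil+1\leq k/2+1<k$ for $m\geq2$. If $\lceil k/m\rceil+1$ turns out not to suffice because of the offset of $u^k$ inside $f(v)^k$, I would fall back on Lemma~\ref{synckp} with Remark~\ref{rsynckp}. That gives that a shortest preimage starts or ends with a $k$-power whose image is conjugate to $u$, which is too short to be all of $v^k$.
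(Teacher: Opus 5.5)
The paper only quotes this lemma from earlier work and gives no proof of it, so your argument has to stand on its own. The plan is right: extract a pure $k$-power $u^k$ with $|u|<|f(v)|$ from a proper factor of $f(v)^k$, then split on $|u^k|$ versus $|u|+|f(v)|$. The long case is fine, and simpler than you fear. Corollary~\ref{Kera} gives $f(v)=(t_2t_1)^m$ with $m\geq 2$. Then $(t_2t_1)^k$ is a prefix of $f(v^{\lceil k/m\rceil})=(t_2t_1)^{m\lceil k/m\rceil}$, and $v^{\lceil k/m\rceil}$ is a proper factor of $v^k$ for every $k\geq 2$. So the offset of $u^k$ plays no role, and neither the ``$+1$'' nor the synchronisation fallback is needed. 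Incidentally, your inequality $\lceil k/m\rceil+1\leq k/2+1$ is false for odd $k$ (e.g.\ $k=5$, $m=2$), although $\lceil k/m\rceil+1<k$ does hold for $k\geq 4$.

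The genuine error is in the short case. It is not true that a factor of $f(v)^k$ of length less than $2|f(v)|$ lies in $f(y)$ for some factor $y$ of $v^k$ with $|y|\leq |v|+1$. Such a factor can meet three consecutive blocks $f(v)f(v)f(v)$. For instance, take $v=a$ and $f(a)=zyxzxzyxzxzyx$. Then $x\,f(a)\,z=(xzyxz)^3$ is a factor of $f(a)^3$, and $|(xzyxz)^3|=15<|xzyxz|+|f(a)|=18$, so this is exactly your short case. Yet the shortest $y$ whose image contains it is $aaa=v^3$, not something of length at most $2$. Your remark that the bound works ``as soon as $k\geq 2$'' should have been a warning sign, since it would make the hypothesis $k\geq 4$ idle.

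The correct statement is as follows. An occurrence of $u^k$ of length $<2|f(v)|$ that starts in the $j$-th block of $f(v)^k$ ends strictly before the end of block $j+2$. Hence $u^k$ is a factor of $f(v^3)$. Since $k\geq 4$, $v^3$ is a proper factor of $v^k$, hence $k$-power-free by purity. So $f(v^3)$ containing $u^k$ contradicts the $k$-power-freeness of $f$. This is exactly where $k\geq 4$ is used, not in the long case. With this repair your proof is complete.
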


\begin{rqe}\label{RemImagePasPure}

Let $f$ be an overlap-free morphism from $A^*$ to $B^*$.
If $auaua$ is a pure overlap and if $|f(a)| \geq 2$ then $f(auaua)$ is not pure.
Indeed, if $f(a)=a_1A_1$ with  $a_1 \in A$ and $A_1 \in A^+$, we get that 
$a_1A_1f(u)a_1A_1f(u)a_1$ is a proper factor (prefix) of $f(auaua)$.

\end{rqe}

%
%
%

%
%



\section{\label{outils2}
        Reduction of a power}

\subsection{\label{spow} About overlap-free morphisms}

\begin{lemma}\label{LemConjImOver}

Let $f$ be a strongly bifix morphism from $A^*$ to $B^*$.

We assume that there exists a letter $a$ in $A$ and two words $T \in A^+$ and $u \in A^*$
such that $auaua$ is a pure overlap and 
$f(T)=\pi_1f(auaua)\sigma_2$ with $\pi_1, \sigma_2 \in B^*$ satisfying
$|\pi_1|<|f(\sub{T}{1})|$ and $|\sigma_2|<|f(\sub{T}{|T|})|$.

Then, either $f$ is not overlap-free, or 
$T$ starts with a pure overlap $bvbvb$ and ends with a pure overlap $ctctc$
such that $|f(bv)|=|f(ct)|=|f(au)|$.
\end{lemma}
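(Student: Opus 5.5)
Assume $f$ is overlap-free; we show that $T$ starts and ends with pure overlaps of the stated length. The plan is to locate the copies of $f(a)$ inside $f(T)$ and use the strong bifix property to recover letters of $T$. The aim is to reduce to a synchronisation situation where Lemma~\ref{LemSyncOverlap} applies, with the overlap found in $T$ being pure thanks to Lemma~\ref{LemConjugPureOverlap}.

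Write $f(T)=\pi_1 f(a)f(u)f(a)f(u)f(a)\sigma_2$. Let $P$ be the prefix of $T$ such that $f(P)$ is the shortest image-prefix covering $\pi_1 f(au)$, and symmetrically for the suffix. The main case split is whether some factor $w_0$ of $T$ has $|f(w_0)|=|f(au)|$ and is placed so that the overlap $X=f(a)f(u)f(a)f(u)\sub{f(a)}{1}$, a factor of $f(T)$, is synchronised in the sense of Definition~\ref{defsyncOverlap}.

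If synchronisation holds, Lemma~\ref{LemSyncOverlap} gives that $T$ (more precisely the minimal factor $\overline{T}$) starts or ends with an overlap whose image is a conjugate of $X$. We need both ends, so we apply the lemma twice: once to the overlap $f(au)f(au)\sub{f(a)}{1}$ sitting at the left, and once to the mirror version $\sub{f(a)}{|f(a)|}f(ua)f(ua)$ at the right (using $\tilde f$ and Remark~\ref{RemMirOver}). The minimality conditions $|\pi_1|<|f(\sub{T}{1})|$ and $|\sigma_2|<|f(\sub{T}{|T|})|$ are what force the found overlaps to sit exactly at the beginning and the end of $T$. This yields $bvbvb$ prefix and $ctctc$ suffix with $|f(bv)|=|f(ct)|=|f(au)|$. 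Purity then follows: $bvbvb$ is overlap-free in its proper factors, since otherwise applying $f$ to a smaller overlap and comparing lengths contradicts the purity of $auaua$ (through Remark~\ref{RemFacUk}, conjugates of $f(au)$), and Lemma~\ref{LemConjugPureOverlap} transfers purity between conjugates.

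The hard case is the absence of synchronisation: no block boundary of the $f$-factorisation of $T$ recurs with period $|f(au)|$. There I would argue that $f(T)$, and hence the image of some overlap-free factor of $T$, contains an overlap. The strategy is to take the factor of $T$ strictly inside the overlap region (e.g.\ $a$'s preimage) and use Lemmas~\ref{1pref}, \ref{1suff}, \ref{LemBif1} to show that the desynchronisation forces a factor $f(x)$ to be an internal factor of $f(y)f(y)$ or similar. Remark~\ref{RemPrimFactInt} and Corollary~\ref{Kera} would then produce non-primitivity or an overlap inside the image of an overlap-free word (e.g.\ $f(auau)$, $auau$ being overlap-free as $auaua$ is pure), contradicting that $f$ is overlap-free. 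I expect this desynchronised case to be the main obstacle. I would first try the extremal offsets: comparing the factorisation of $f(au)$ as it occurs at its two positions, shifted by $|f(au)|$, and deriving via Proposition~\ref{Lothaire} a square-plus-letter pattern inside $f(auau)$.
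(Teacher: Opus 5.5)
There is a genuine gap, and it is where you expected it. The non-synchronised case carries essentially all the content of the lemma, and your sketch for it contains no mechanism that produces the required overlap.

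Desynchronisation is only negative information about where the block boundaries of $f(T)$ fall. By itself it does not single out any overlap-free factor of $T$ whose image you could prove contains an overlap. In that situation an occurrence of $f(a)$ in $f(T)$ need not be the image of a factor of $T$, so ``$a$'s preimage'' is not available.

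The missing idea is the one the paper starts from. Since $f$ is assumed overlap-free and $f(T)$ contains the overlap $f(auaua)$, the word $T$ itself must contain a pure overlap, $T=T_1bvbvbT_2$. The proof is then a case analysis on the position of $f(bvbvb)$ relative to $f(auaua)$ inside $f(T)=\pi_1f(auaua)\sigma_2$:
\begin{itemize}
\item If $|f(T_1b)|=|\pi_1f(a)|$ (or symmetrically $|f(bT_2)|=|f(a)\sigma_2|$), then $f(a)$ and $f(b)$ share their last (resp.\ first) letter. So either $b=a$ and $T=auaua$, or $f(baa)$ (resp.\ $f(aab)$) contains an overlap.
\item If one image is shifted far enough relative to the other, an overlap read off from one of them falls inside the image of one of the overlap-free words $uaua$, $auau$, $vbvb$, $bvbv$.
\item If $f(bvbvb)$ and $f(auaua)$ share a factor of length at least $|f(au)|+|f(bv)|$, Corollary~\ref{Kera} gives $f(au)=(t_1t_2)^i$ and $f(bv)=(t_2t_1)^j$. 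Overlap-freeness then forces $i=j=1$, which locates an overlap of period $|f(au)|$ at an end of $T$.
\item The remaining configuration, where $f(a)$ lies strictly inside $f(b)$ or conversely, is settled by Proposition~\ref{Lothaire}, which produces $r(sr)^{n+2}$ inside $f(bvb)$.
\end{itemize}
Without exploiting an overlap of $T$ in this way, your plan has no handle on the letters of $T$.

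Your synchronised case also needs repair.
\begin{itemize}
\item \emph{Both ends.} Applied to the left overlap $f(au)f(au)\sub{f(a)}{1}$, Lemma~\ref{LemSyncOverlap} does yield a prefix $bvbvb$ of $T$ with $|f(bv)|=|f(au)|$. But you give no reason why the mirror overlap $\sub{f(a)}{|f(a)|}f(ua)f(ua)$ is then synchronised too, so the second application is unfounded. What gives the other end is the first step of the paper's proof. Suppose $T=bvbvbT'$ with $|f(bv)|=|f(au)|$ and $\pi_1\neq\varepsilon$, and let $bv_1$ be the shortest prefix of $bv$ with $|f(bv_1)|>|\pi_1f(a)|$. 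Comparing $f(bT')=\pi_1f(a)\sigma_2$ with $f(bv_1)=\pi_1f(a)\sigma'$, Lemma~\ref{1pref} forces $T'=v_1$. Hence $T$ ends with the overlap built on a conjugate of $bv$ (and $\pi_1=\varepsilon$ gives $T=auaua$).
\item \emph{Purity.} Your purity argument does not work. A shorter overlap inside $bvbvb$ contradicts nothing about $auaua$. Lemma~\ref{LemConjugPureOverlap} also cannot transfer purity from $auaua$ to $bvbvb$, because $bv$ need not be a conjugate of $au$ in $A^*$; only $|f(bv)|=|f(au)|$ is known. The paper gets purity by choosing $bvbvb$ pure at the outset. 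It uses Lemma~\ref{LemConjugPureOverlap} only between $bv$ and its genuine conjugate in $A^*$ at the other end of $T$.
\end{itemize}
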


\begin{rqe}

If $au$ and $bv$ are conjugated words, then the same holds for $f(au)$ and $f(bv)$ (and, trivially,
$|f(bv)|=|f(au)|$). The converse does not hold.

\end{rqe}

\begin{rqe}\label{RemConjImOver}

If $T$ starts with $bvbvb$ then $|\pi_1|<|f(b)|$.
If $T$ ends with $bvbvb$ then $|\sigma_2|<|f(b)|$.
So either $f(bv)=f(au)$ or  $f(bv)$ is a conjugated word of $f(au)$.
So, one of the word $f(bv)$ or $f(vb)$ is an internal factor of $f(au)f(au)$ with 
$|f(bv)|=|f(vb)|=|f(au)|$. 

\end{rqe}

\begin{proof2}{Lemma~\ref{LemConjImOver}}

Since $f$ is a ps-morphism, we have $f$ non-erasing and injective.

If $T$ is an overlap-free word, then $f$ is not overlap-free, and the proof is complete in this case.
So, we assume that $T = T_1bvbvbT_2$ where $b \in A$, $v, T_1, T_2 \in A^*$, with the additional condition 
that $bvbvb$ is a pure overlap.

Therefore, we have $f(T_1)f(b)f(v)f(b)f(v)f(b)f(T_2) = \pi_1f(a)f(u)f(a)f(u)f(a)\sigma_2$.

Let $\alpha_1$ be the first letter of $f(a)$ and $\alpha_2$ be the last letter of $f(a)$.
Let $A_1$ and $A_2$ be the words in $B^*$ such that $f(a)=\alpha_1A_1=A_2\alpha_2$.

Let $\beta_1$ be the first letter of $f(b)$ and $\beta_2$ be the last letter of $f(b)$.
Let $B_1$ and $B_2$ be the words in $B^*$ such that $f(b)=\beta_1B_1=B_2\beta_2$.

\setcounter{comptnivun}{1}
{\textbf{\textit{Step \thecomptnivun}}}:
If $T$ starts with an overlap $bvbvb$ such that $|f(bv)|=|f(au)|$
then it ends with an overlap $ctctc$ such that $|f(ct)|=|f(au)|$
and vice versa.

The case where $T$ ends with an overlap is the mirror case of the 
case where $T$ starts with an overlap. 
We only deal with this last case.

Let us assume that $T=bvbvbT'$ with $|f(bv)|=|f(au)|$.

If $\pi_1=\varepsilon$, then, by Lemma~\ref{1pref}, we get $au=bv$. 
Since $|\sigma_2|<|f(T')|$, it implies that $\sigma_2=\varepsilon$ and $T'=\varepsilon$,
i.e., $T=auaua$ (with obviously $|f(au)|=|f(ua)|=|f(bv)|$).

If $\pi_1 \neq \varepsilon$, let $bv_1$ be the shortest prefix of $bv$ such that $|f(bv_1)|>|\pi_1f(a)|$.
Let $\sigma'$ be the word such that $f(bv_1)=\pi_1f(a)\sigma'$ with $|\sigma'|<|f(\sub{v_1}{|v_1|})|$ and
let $v_2$ be the word such that $v=v_1v_2$.
Since $|\pi_1f(au)|=|f(bv)|+|\pi_1|=|\pi_1f(a)\sigma'| +|f(v_2)|+|\pi_1|$, it follows that 
$f(u)=\sigma'f(v_2)\pi_1$.
It implies that $f(bT')=\pi_1f(a) \sigma_2$.
Since $f$ is strongly bifix, $f(bv_1)=\pi_1f(a)\sigma'$,
$|\sigma'|<|f(\sub{v_1}{|v_1|})|$ and $|\sigma_2|<|f(\sub{T'}{|T'|})|$, by Lemma~\ref{1pref},
we get that $T'=v_1$. Let $c$ be the last  letter of $bv_1$ ($\neq \varepsilon$) and 
let $v_3$ be the word such that $bv_1=v_3c$.
We get that $T=v_3cv_2v_3cv_2v_3c$ with $|f(v_2v_3c)|=|f(bv)|=|f(au)|$.

Let us note that, in fact, $v_2v_3c$ is a conjugated word of $bv$.

\addtocounter{comptnivun}{1}
{\textbf{\textit{Step \thecomptnivun}}}:
Elementary cases

\quad \textit{Case 1.a}: $|f(T_1b)| = |\pi_1f(a)|$

In this case, $f(b)$ ends with $\alpha_2$. If $b \neq a$ then
$f(baa)$ ends with $\alpha_2 A_2\alpha_2 A_2\alpha_2$. This means that $f$ is not overlap-free.
If $b=a$ then, since $|\pi_1|<|f(\sub{T}{1})|$, we obtain $T_1= \varepsilon=\pi_1$.
From Lemma~\ref{LemBif1}, the equality $(f(T)=)f(bvbvbT_2)=f(auaua)\sigma_2$ with
$|\sigma_2|<|f(\sub{T}{|T|}|$ implies that $\sigma_2=\varepsilon$ and that $bvbvbT_2=auaua$.
Since $auaua$ is a pure overlap,
we therefore have $T_2=\varepsilon$ and $au=bv$.



\quad \textit{Case 1.b:} $|f(bT_2)| = |f(a)\sigma_2|$

As in case 1.a, we obtain that $T$ ends with
$ua=vb$ when $a=b$. And $f(aab)$ starts with $\alpha_1A_1\alpha_1A_1\alpha_1$ when $a\neq b$,
i.e., $f$ is not overlap-free.

\quad \textit{Case 2.a}: $|\pi_1f(a)| \leq |f(T_1)|$

Since $|\sigma_2|<|f(\sub{T}{|T|})|$,
the word $f(uaua)$ contains the overlap $\beta_1B_1f(v)\beta_1B_1f(v)\beta_1$
with $uaua$ overlap-free: $f$ is not overlap-free.

\quad \textit{Case 2.b}: $|f(a)\sigma_2| \leq |f(T_2)|$ 

As in case 2.a and since $|\pi_1|<|f(\sub{T}{|1|})|$, we get that
the word $f(auau)$ contains the overlap $\beta_2f(v)B_2\beta_2f(v)B_2\beta_2$
with $auau$ overlap-free:~$f$ is not overlap-free.

\quad \textit{Case 3.a}: $|\pi_1f(a)| > |f(T_1b)|$ and $T_2 =\varepsilon$

It means that $|\pi_1A_2| \geq |f(T_1b)|$. So
the overlap $\alpha_2f(u)A_2\alpha_2f(u)A_2\alpha_2$ is a factor of $f(vbvb)$ with
$vbvb$ overlap-free:~$f$ is not overlap-free.

\quad \textit{Case 3.b}: $|f(a)\sigma_2|>|f(bT_2)|$ and $T_1=\varepsilon$

As Case 3.a, we obtain that the overlap $\alpha_1A_1f(u)\alpha_1A_1f(u)\alpha_1$ is a factor of $f(bvbv)$ 
with $bvbv$ overlap-free:~$f$ is not overlap-free.

\addtocounter{comptnivun}{1}
{\textbf{\textit{Step \thecomptnivun}}}:  
Other cases

If $f(bvbvb)$ and $f(auaua)$ have a common factor of length greater than or equal to $|f(au)| + |f(bv)|$ then,
according to the corollary~\ref{Kera}, there exist two integers $i,j \geq 1$ and two words $t_1,t_2$ such that $f(au) = (t_1t_2)^i$ and $f(bv) = (t_2t_1)^j$, where $t_1t_2$ and $t_2t_1$ are primitive words.

If $i \geq 2$, then $f(auau)$ contains a cube, and therefore an overlap with $auau$ overlap-free. 
Similarly, if $j \geq 2$, we obtain that $f$ is not overlap-free.
Having $i=j=1$ means that $f(au)$ and $f(bv)$ are conjugated words.
Since $|\pi_1|<|f(\sub{T}{1}|$ and $|\sigma_2|<|f(\sub{T}{|T|})|$,
if we have $T_1 = \varepsilon$, or $T_2 = \varepsilon$ then it ends the proof.

So, from now, when  $T_1 = \varepsilon$, or $T_2 = \varepsilon$, we will assume that any common 
factor of $f(bvbvb)$ and $f(auaua)$ is of length less than $|f(au)| + |f(bv)|$.

\quad \textit{Case 4.a}: $|f(T_1)| < |\pi_1f(a)| <|f(T_1b)|$ and $T_1 \neq \varepsilon$

If $T_2= \varepsilon$, then $f(uaua)$ and $f(bvbv)$ are common factor of $f(bvbvb)$ and $f(auaua)$
(of length greater than or equal to $|f(au)| + |f(bv)|$).
So, as previously said, it ends the proof.

If $T_2 \neq \varepsilon$, since $|\sigma_2|<|f(T_2)|$,
the word $f(uaua)$ contains the overlap $\beta_2f(v)B_2\beta_2f(v)B_2\beta_2$
with $uaua$ overlap-free:~$f$ is not overlap-free.

\quad \textit{Case 4.b}: $|f(bT_2)| < |f(a)\sigma_2|<|f(T_2)|$ and $T_2 \neq \varepsilon$ 

This case is solved as Case 4.a.

\quad  \textit{Case 5:} $|\pi_1f(a)| < |f(b)|$ and $|f(a)\sigma_2| < |f(b)|$ with $T_1=T_2=\varepsilon$ 

As we previously said, if $f(bvbvb)$ and $f(auaua)$ have a common factor of length greater or equal 
to $|f(au)| + |f(bv)|$ then it ends the proof.

There exist two non-empty words $x$ and $t$ such that
$f(b)=\pi_1f(a)t=xf(a)\sigma_2$. Since $f(auaua)=f(a)tf(vbv)xf(a)$ is a common factor
of $f(bvbvb)$ and $f(auaua)$, we have
$|f(auaua)|+|f(a)tf(vbv)xf(a)| < 2|f(au)| + 2|f(bv)|$, i.e.,
$|f(b)|>3|f(a)|+|x|+|t|$. Therefore, there exists a word $Y$ such that $|Y|>|f(a)|$, $\pi_1=xf(a)Y$,
and $\sigma_2=Yf(a)t$, that is, $f(b)=xf(a)Yf(a)t$. 
Since $f(uau)=tf(vbv)x$, we obtain that $|f(u)|=|tf(v)x| +\frac{1}{2}|aY|> |t|+|x|+|f(v)f(a)|$.
So, there exist two words $y$ and $z$ such that $Y=yf(a)z$, i.e., $f(b)=xf(a)yf(a)zf(a)t$ and 
$f(u)=(tf(v)x) \, (f(a)y) = (zf(a))\, (tf(v)x)$.
From Case~\ref{Lotcase1} of Proposition~\ref{Lothaire}, we get  that
there exist two words $r$ and $s$ in $A^*$, and an integer $n$ such that $tf(v)x=r(sr)^{n}$, $zf(a)=rs$,
and $f(a)y=sr$.
It follows that $f(bvb)$ contains $zf(a)tf(v)xf(a)y=r(sr)^{n+2}$ for an integer $n$
(since $tx \neq \varepsilon$, if $n=0$ then $r \neq \varepsilon$)
and so an overlap. Since $bvb$ is overlap-free, $f$ is not overlap-free.

\quad \textit{Case 6}: $|\pi_1f(a)| > |f(T_1b)|$ and $|f(a)\sigma_2|>|f(bT_2)|$ with $T_1 \neq \varepsilon$
and $T_2 \neq \varepsilon$

There exist two non-empty words $\sigma_1$ and $\pi_2$ such that
$f(T_1)=\pi_1 \sigma_1$ and $f(T_2)=\pi_2 \sigma_2$.
Since $|\pi_1|<|f(\sub{T}{1}|$, $|\sigma_2|<|f(\sub{T_2}{|T_2|})|$, $f(b) \neq \varepsilon$ and
$f$ strongly bifix morphism, the word $\sigma_1$ is a proper prefix
of $f(a)$ and the word $\pi_2$ is a proper suffix of $f(a)$. 

If $f(bvbvb)$ and $f(auaua)$ have a common factor of length greater or equal to $|f(au)| + |f(bv)|$ then,
$|f(au)|=|f(bv)|$.
So $|f(T_1bv)|=|\pi_1f(au)\sigma_1|$ and  it follows that $\sigma_1$ is a suffix of $f(v)$. 
By Lemma~\ref{LemBif1}, since $|\pi_1|<|f(\sub{T_1}{1})|$, we get that
$T_1$ is a suffix of $v$. Let $v'$ be the word such that $v=v'T_1$.
The overlap $T_1bv'T_1bv'T_1$ is a prefix of $T$.
Furthermore, $T_1bv'$ and $bv$ are conjugated words.
It follows that $|f(T_1bv')|=|f(au)|$.

In the case of any common factor of $f(bvbvb)$ and $f(auaua)$ is of length less than $|f(au)| + |f(bv)|$,
we obtain the equation $\sigma_1 f(bvbvb) \pi_2=f(auaua)$ with $|\sigma_1 f(b)|<|f(a)|$
and $|f(b) \pi_2|<|f(a)|$. This case is solved as Case 5, by exchanging the roles of $au$ and $bv$.
\end{proof2}


\subsection{\label{cond}
        Equations of reduction}

The reduction technique described in Lemma~\ref{Reduc1} and using Lemma~\ref{PcorEquas} 
is the central idea in the proof of Proposition~\ref{mainresult}.

\begin{lemma} \label{PcorEquas}{\rm \cite{Wla2015a}}

Let $\alpha_1,\alpha_2$, $\beta_1,\beta_1',\beta_2$, $\gamma_1, \gamma_2$
be words over an alphabet $B$ such that
$|\beta_1|=|\beta_2| \neq 0$, $\beta_1'$ is a proper suffix of $\beta_1$, and
$0 \leq |\alpha_2|- |\alpha_1| \leq |\beta_1'|$.\\
Under these hypotheses, the equality $\alpha_2 \beta_2 \gamma_2 = \alpha_1\beta_1'\beta_1\gamma_1$ 
implies  $\alpha_2 \gamma_2=\alpha_1\beta_1'\gamma_1$.

\end{lemma}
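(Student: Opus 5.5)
The plan is a direct length and position argument. It uses only the fact that $\beta_1$ and $\beta_2$ have the same length and that the shift between the two factorisations is controlled by $\beta_1'$.

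First I will set notation. Put $d=|\alpha_2|-|\alpha_1|$, so that $0\le d\le|\beta_1'|<|\beta_1|$. Comparing lengths in $\alpha_2\beta_2\gamma_2=\alpha_1\beta_1'\beta_1\gamma_1$ and using $|\beta_1|=|\beta_2|$ gives
\[
|\gamma_2|=|\gamma_1|+|\beta_1'|-d\ \ge\ |\gamma_1| .
\]
Because $d\ge 0$, $\alpha_1$ is a prefix of $\alpha_2$. Write $\alpha_2=\alpha_1\delta$ with $|\delta|=d$. Cancelling $\alpha_1$ on the left leaves
\[
\delta\beta_2\gamma_2=\beta_1'\beta_1\gamma_1 .
\]
The claim $\alpha_2\gamma_2=\alpha_1\beta_1'\gamma_1$ then becomes $\delta\gamma_2=\beta_1'\gamma_1$. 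Both sides of this have the same length, $d+|\gamma_2|=|\beta_1'|+|\gamma_1|$.

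Next I will locate $\delta$ and $\gamma_2$ inside $\beta_1'\beta_1\gamma_1$. Since $d\le|\beta_1'|$, $\delta$ is a prefix of $\beta_1'$; write $\beta_1'=\delta\rho$ with $|\rho|=|\beta_1'|-d$. The word $\gamma_2$ is the suffix of $\beta_1'\beta_1\gamma_1$ of length $|\gamma_1|+|\rho|$. Since $|\rho|\le|\beta_1'|<|\beta_1|$, this suffix equals $\rho''\gamma_1$, where $\rho''$ is the suffix of $\beta_1$ of length $|\rho|$. It therefore remains to prove $\rho''=\rho$. Here I use that $\beta_1'$ is a suffix of $\beta_1$: the suffix of $\beta_1$ of length $|\rho|$ is also the suffix of $\beta_1'$ of that length, and that suffix is $\rho$. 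Hence $\gamma_2=\rho\gamma_1$, and so $\delta\gamma_2=\delta\rho\gamma_1=\beta_1'\gamma_1$.

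The only delicate step is justifying that $\gamma_2$ lies entirely inside $\beta_1\gamma_1$ and ends exactly at the end of $\gamma_1$. This needs $|\gamma_2|\le|\beta_1\gamma_1|$, which is where $|\beta_1'|<|\beta_1|$ enters; the properness of the suffix is exactly what is used. I expect no real obstacle beyond careful bookkeeping of lengths. In particular, the equality $\delta\beta_2=\beta_1'\,\sub{\beta_1}{1..|\beta_1|-|\rho|}$ is a consequence of the decomposition and is not needed to reach the conclusion.
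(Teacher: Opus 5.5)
The paper does not prove this lemma; it is quoted from \cite{Wla2015a}, so there is no in-paper argument to compare yours against. Your direct length-bookkeeping proof is correct and complete. The steps are: cancel $\alpha_1$, write $\beta_1'=\delta\rho$, read $\gamma_2$ off as the suffix of $\beta_1'\beta_1\gamma_1$ of length $|\gamma_1|+|\rho|$, and identify that suffix with $\rho\gamma_1$. The last identification holds because $\rho$ is a suffix of $\beta_1'$, hence of $\beta_1$. This is exactly what is needed.

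One small correction to your closing commentary: the properness of $\beta_1'$ is not actually used. The step $|\gamma_2|\le|\beta_1\gamma_1|$ only requires $|\rho|\le|\beta_1|$, which already follows from $|\rho|\le|\beta_1'|\le|\beta_1|$. Likewise, the hypothesis $|\beta_1|\neq 0$ plays no role. So your argument proves the statement under slightly weaker hypotheses than those stated.
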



\begin{lemma} \label{Reduc1}{\rm \cite{Wla2015a}}

Let ${\kappa} \geq 3$ be an integer.
Let $f$ be a morphism from $A^*$ to $B^*$.
Let $(w_i)_{i=1..{\kappa}+1}$ and $(x_i)_{i=1..{\kappa}}$ be words in $A^*$
such that $|f(x_i)|=|f(x_j)|\neq 0$ for all integers $i,j$ in $[1,{\kappa}]$.

We denote by $w$ the word $w_1x_1...w_{\kappa}x_{\kappa}w_{{\kappa}+1} $.

We assume that there exist words 
$u$, $p$, $s$, $(X_i)_{i=1..{\kappa}}$, and $(Y_i)_{i=1..{\kappa}}$ in $B^*$ such that
$f(w_1)= pX_1$, $f(w_{{\kappa}+1})=Y_{{\kappa}}s$, and
$f(w_i)=Y_{i-1}X_{i}$ for all $2 \leq i \leq {\kappa}$.
Moreover, we assume that, for all integers $i$ in $[1,{\kappa}]$, we have 
$u=X_if(x_i)Y_i$. It means that $f(w)=pu^{\kappa}s$.

Let us also assume that there exists an integer $q$ such that, for every integer $i$ in $[1,{\kappa}]$, 
$0 \leq |X_q|- |X_i| \leq |X''_q|$ where $X''_q$ is a common suffix of $X_q$ and $f(x_q)$.

Then the word $\check{w}=w_1w_2...w_{\kappa}w_{{\kappa}+1}$ satisfies
$f(\check{w})=p\check{u}^{\kappa}s$ with
$\check{u}=X_iY_i$ for every integer $i$ in $[1,{\kappa}]$.

In particular, $f(\check{w})$ and $\check{u}^{\kappa}$ are synchronised only if $f(w)$ and $u^{\kappa}$ are synchronised.

\end{lemma}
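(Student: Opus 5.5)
The plan is to show first that the word $X_iY_i$ does not depend on $i$. The identity $f(\check{w})=p\check{u}^{\kappa}s$ then follows from the decompositions of the $f(w_i)$ by regrouping. The comparison of $X_iY_i$ with $X_qY_q$ is done with Lemma~\ref{PcorEquas}, using the index $q$ given by the hypothesis.

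\emph{Step 1: independence of $X_iY_i$.} Write $X_q=X'_qX''_q$, where $X''_q$ is the given common suffix of $X_q$ and $f(x_q)$. For a fixed $i$ we have two factorisations of $u$:
\[
X_i\,f(x_i)\,Y_i \;=\; u \;=\; X'_q\,X''_q\,f(x_q)\,Y_q .
\]
We apply Lemma~\ref{PcorEquas} with
\[
\alpha_2=X_i,\quad \beta_2=f(x_i),\quad \gamma_2=Y_i,\qquad \alpha_1=X'_q,\quad \beta_1'=X''_q,\quad \beta_1=f(x_q),\quad \gamma_1=Y_q .
\]
The length condition $|\beta_1|=|\beta_2|\neq 0$ is the hypothesis $|f(x_i)|=|f(x_q)|\neq 0$. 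The shift condition holds because
\[
|\alpha_2|-|\alpha_1| \;=\; |X_i|-|X_q|+|X''_q| ,
\]
and this lies in $[0,|X''_q|]$ exactly by the assumption $0\le |X_q|-|X_i|\le |X''_q|$. The lemma then gives $X_iY_i=X'_qX''_qY_q=X_qY_q$.

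The only delicate point is that Lemma~\ref{PcorEquas} asks for $\beta_1'$ to be a \emph{proper} suffix of $\beta_1$. If $X''_q=f(x_q)$, I would either replace $X''_q$ by a shorter common suffix, or check the degenerate case directly. In that case $X'_q f(x_q)f(x_q)Y_q$ has period $|f(x_q)|$ on its middle block. Deleting any window of length $|f(x_q)|$ that starts at offset $d\in[0,|f(x_q)|]$ inside that block leaves $X'_qf(x_q)Y_q$, which again gives $X_iY_i=X_qY_q$. I expect this edge case to be the main technical care point of Step~1.

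\emph{Step 2: the equality for $\check{w}$.} Set $\check{u}=X_iY_i$, which Step~1 shows is independent of $i$. Then
\[
f(\check{w}) \;=\; pX_1\,(Y_1X_2)\cdots(Y_{\kappa-1}X_\kappa)\,Y_\kappa s \;=\; p\,(X_1Y_1)(X_2Y_2)\cdots(X_\kappa Y_\kappa)\,s \;=\; p\,\check{u}^{\kappa}s .
\]

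\emph{Step 3: the synchronisation statement.} Suppose $f(\check{w})$ and $\check{u}^\kappa$ are synchronised, witnessed by $\check{w}=w'_1w'_0w'_2$ with $|f(w'_0)|=|\check{u}|$. Reinserting the $x_i$ turns $w'_0$ into a factor of $w$ whose image has length $|\check{u}|+|f(x_j)|=|u|$. This holds because a window of length $|\check{u}|$ in $\check{u}^\kappa$ straddles exactly one cut point between consecutive $w_j$'s, so it contains exactly one deletion site. The boundary conditions ($p=\varepsilon$ if $w_1=\varepsilon$, and $s=\varepsilon$ if $w_2=\varepsilon$) transfer because $p$ and $s$ are unchanged. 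Hence $f(w)$ and $u^\kappa$ are synchronised.

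I expect this last bookkeeping to be the main obstacle. One must verify that a shortest factor $\overline{\check{w}}$ corresponds to a shortest factor $\overline{w}$, and handle window placements where the reinserted $x_i$ falls at the boundary of $w'_0$. I would try to choose $w'_0$ so that the deletion site lies strictly inside it, and adjust it by one block otherwise.
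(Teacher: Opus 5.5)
Your Steps 1 and 2 are correct and follow the route the paper indicates. The paper quotes this lemma from \cite{Wla2015a} without proof, but says the reduction rests on Lemma~\ref{PcorEquas}. Your instantiation of that lemma is right, and so is the check $|X_i|-|X'_q|\in[0,|X''_q|]$. The edge case $X''_q=f(x_q)$ is harmless: the argument behind Lemma~\ref{PcorEquas} only uses that $\beta'_1$ is a suffix of $\beta_1$.

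Step 3 is where the proof breaks down. Your key claim is that a window of length $|\check u|$ in $\check u^{\kappa}$ contains exactly one deletion site, and this is false. In $f(\check w)$ the sites are at positions $|p|+(i-1)|\check u|+|X_i|$, so consecutive sites are $|\check u|+|X_{i+1}|-|X_i|$ apart. A window can therefore contain no site, or two. When $f(w'_0)$ lies strictly between two sites, nothing is reinserted. Moreover, the possible image lengths of factors of $w$ are dictated by its letters, so there may be no other factor of image length $|u|$ to ``adjust'' to.

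In fact, for an arbitrary morphism, as in the statement, the clause itself fails. Take $\kappa=3$ and
\begin{itemize}
\item $f(\xi_1)=2341$, $f(\xi_2)=4123$, $f(\xi_3)=1234$;
\item $f(\ell_1)=2$, $f(\ell_2)=345672$, $f(\ell_3)=3$, $f(\ell_4)=4567234$, $f(\ell_5)=567$;
\item $w_1=\varepsilon$, $w_2=\ell_1\ell_2\ell_3$, $w_3=\ell_4$, $w_4=\ell_5$, $x_i=\xi_i$, $p=s=\varepsilon$;
\item $X_1=\varepsilon$, $X_2=23$, $X_3=234$, $Y_1=234567$, $Y_2=4567$, $Y_3=567$, $q=3$, $X''_3=234$.
\end{itemize}
All hypotheses hold. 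We get $u=2341234567$ with $f(w)=u^3$, and $\check u=234567$ with $f(\check w)=\check u^3$. Hence $\overline{w}=w$ and $\overline{\check w}=\check w$.

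\begin{itemize}
\item Along $w$ the image lengths are $4,1,6,1,4,7,4,3$, and no factor sums to $|u|=10$. So $f(w)$ and $u^3$ are not synchronised.
\item Along $\check w$ they are $1,6,1,7,3$, and $\ell_2$ alone has image length $|\check u|=6$. So $f(\check w)$ and $\check u^3$ are synchronised.
\end{itemize}
Here $|X_2|-|X_1|=2$ makes the first gap between sites $|\check u|+2$, and $f(\ell_2)$ sits strictly inside it.

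So the synchronisation transfer cannot come from length bookkeeping alone. It presumably needs an extra hypothesis on $f$, together with an argument that actually uses it. A natural candidate is strong bifixity, which holds where the paper applies the lemma and fails in this example, since $f(\ell_1)$ is a prefix of $f(\xi_1)$.
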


The situation described in Figure~\ref{reduc1a} is an example of a case
where the hypotheses of Lemma~\ref{Reduc1} hold. 

\begin{figure}[!ht]
\centering
\includegraphics[width=12.5cm]{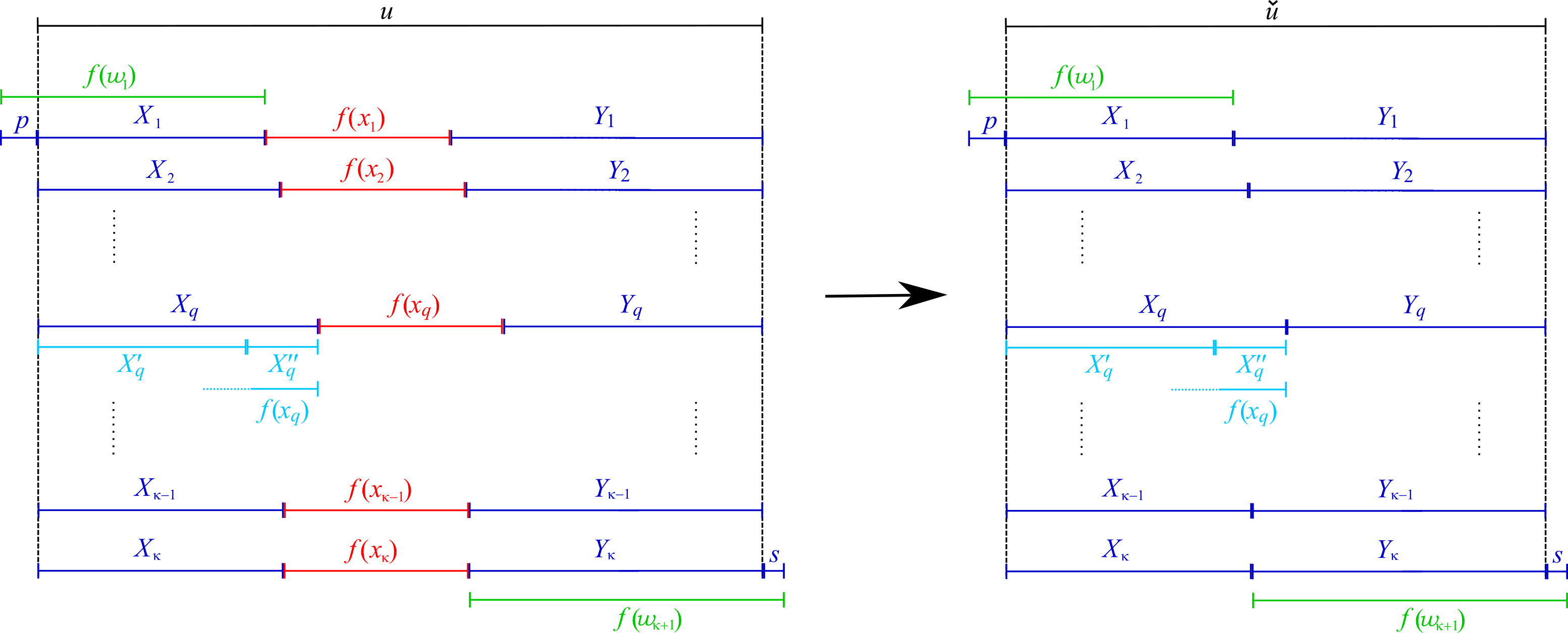}
\caption{Reduction of a power}
\label{reduc1a}
\end{figure}

We say that we have reduced $w$.

Let us note that $p$ is not necessarily a prefix of $f(\sub{w_1}{1})$ and 
$s$ is not necessarily a suffix of $f(\sub{w_{\kappa}}{|w_{\kappa}|})$.

\begin{rqe}\label{RemReduc1}

Let us remark that some hypotheses of Lemma~\ref{Reduc1} are satisfied when $f(x_i)=f(x_q)$ or
when $f(x_i)$ is an internal factor of 
$X_q''f(x_q)$ so of $f(x_qx_q)$ with $|f(x_i)|=|f(x_q)|$, i.e., $f(x_i)$ is a conjugated word $f(x_q)$.

\end{rqe}

Figure~\ref{remcas4} deals with Point~\ref{RemReducC1} of Remark~\ref{RemReduc} and
Figure~\ref{remcas5} deals with Point~\ref{RemReducC2} of Remark~\ref{RemReduc}.
In addition to the situation described in Figure~\ref{reduc1a}, we will mostly use these two other
points in the proof of Proposition~\ref{mainresult}.
%

\begin{figure}[!ht]
\centering
\includegraphics[width=12.5cm]{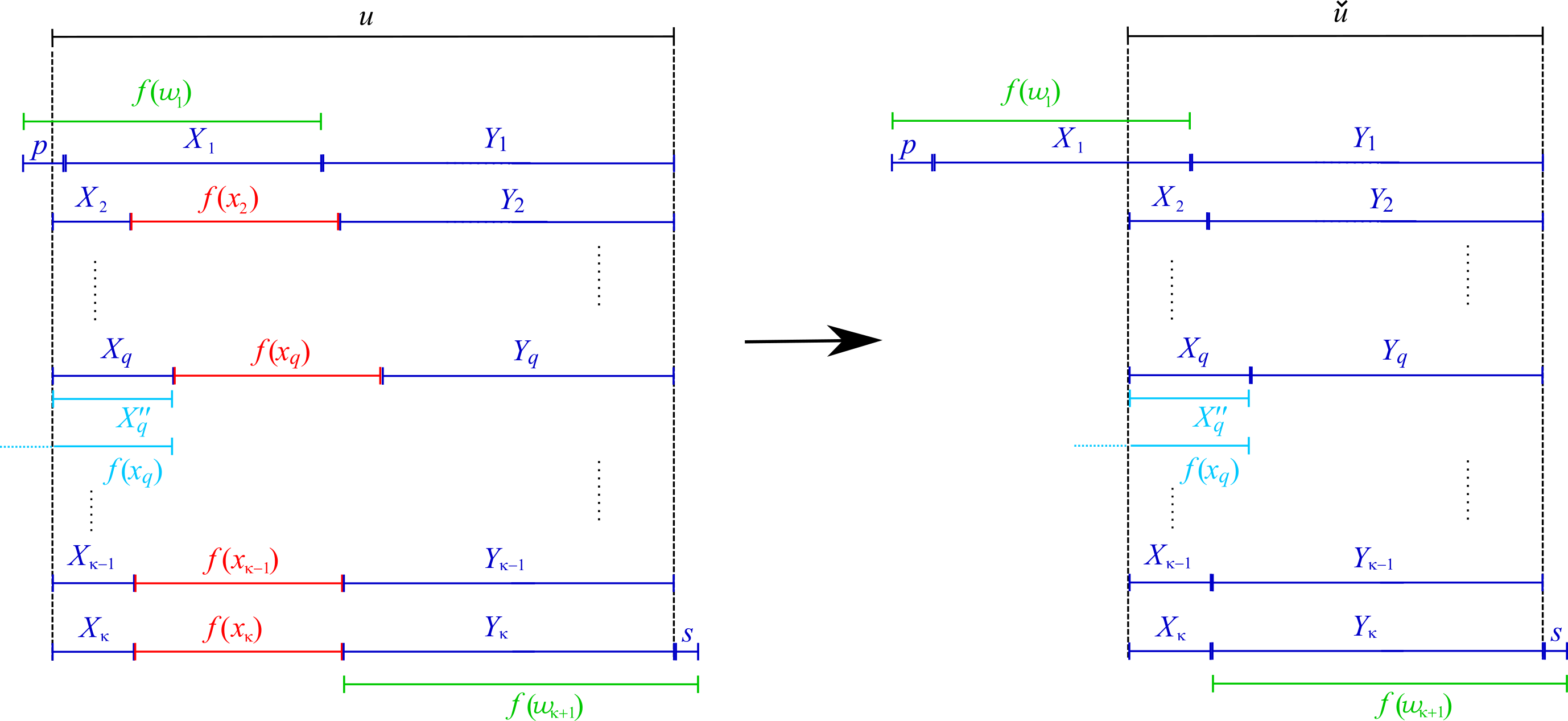}
\caption{Point~\ref{RemReducC1} of Remark~\ref{RemReduc}}
\label{remcas4}
\end{figure}

\begin{figure}[!ht]
\centering
\includegraphics[width=12.5cm]{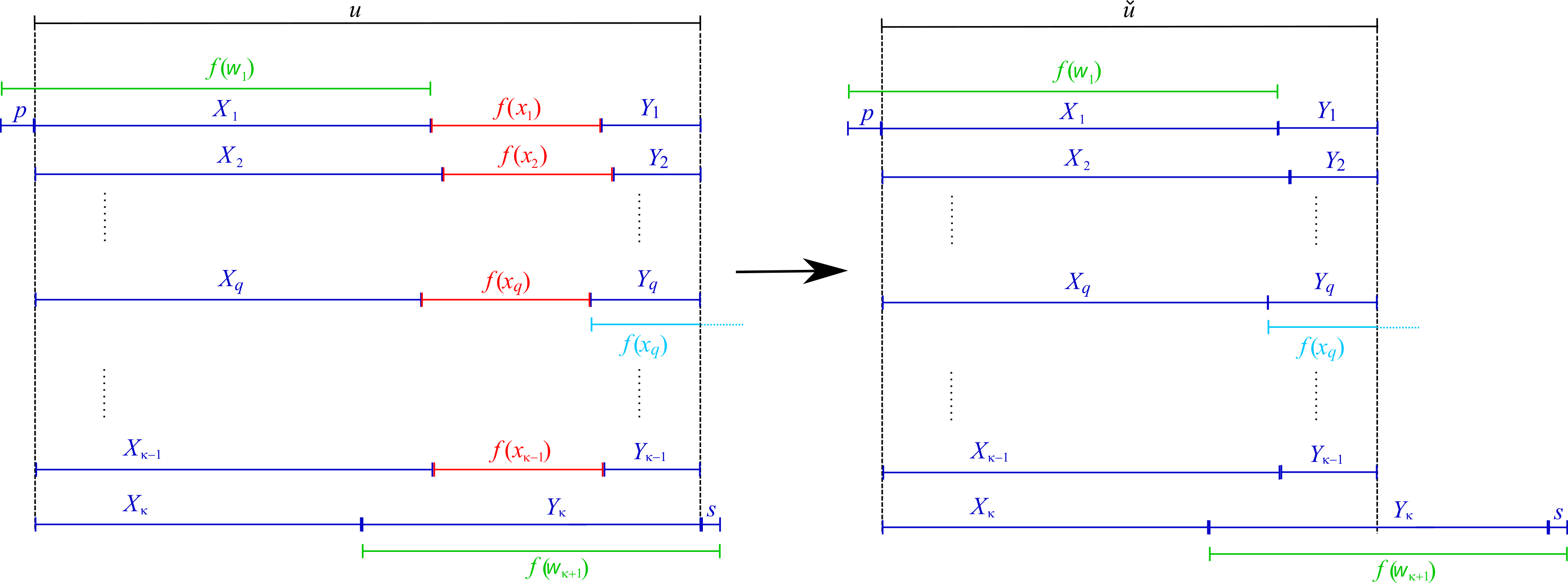}
\caption{Point~\ref{RemReducC2} of Remark~\ref{RemReduc}}
\label{remcas5}
\end{figure}

\begin{rqe}\label{RemReduc}

\begin{enumerate}

\item Using the mirror image and exchanging $|X_q|$ the maximum of $|X_i|$ by the maximum $|Y_q|$ of $|Y_i|$ 
(i.e., $|X_q|$ is the minimum of $|X_i|$), the condition 
"$0 \leq |X_q|- |X_i| \leq |X''_q|$ where $X''_q$ is a common 
suffix of $X_q$ and $f(x_q)$" of Lemma~\ref{Reduc1} can be replaced by "$0 \leq |Y_q|- |Y_i| \leq |Y'_q|$ 
where $Y'_q$ is a common prefix of $Y_q$ and $f(x_q)$".

\item A prefix $u_1$ of $u$ is also a prefix of $\check{u}$ if $|u_1|<|X_q|$, and 
a suffix $u_2$ of $u$ is also a suffix of $\check{u}$ if $|u_2|< \max |Y_j|$.

\item \label{RemPasComp} 

If, instead of $u=X_{\kappa}f(x_{\kappa})Y_{\kappa}$, we only have
that  $X_{\kappa}f(x_{\kappa})Y_{\kappa}$ is a prefix of $u$ 
then $f(\check{w})=p\check{u}^{\kappa-1}X_{\kappa}Y_{\kappa}s$  with 
$X_{\kappa}Y_{\kappa}$ prefix of $\check{u}$.

\item \label{RemReducC1} If $q \neq 1$ and $X_q$ is a suffix of $f(x_q)$,
i.e., $X_q'=\varepsilon$ (see Figure~\ref{remcas4}),
then we do not need $x_1$ and optionally not $w_1$ in the hypotheses of Lemma~\ref{Reduc1}.
Conclusion remains true with $u=X_1Y_1$, $w_2'=w_1w_2$ or $w_2$, $f(w_2')=pX_1Y_1X_2$, 
$w=w_2'x_2w_3...w_{\kappa}x_{\kappa}w_{{\kappa}+1}$, and $\check{w}$ a (not necessarily proper) suffix of $w_2'w_3...w_{\kappa}w_{{\kappa}+1}$

\item \label{RemReducC2} By mirror image of Case~\ref{RemReducC1}, we get that, 
if $q \neq {\kappa}$ and $Y_q$ is a prefix of $f(x_q)$ then we do not need $x_{\kappa}$ 
and optionally not $w_{{\kappa}+1}$ in the hypotheses of Lemma~\ref{Reduc1}.
Conclusion remains true with $u=X_{\kappa}Y_{\kappa}$, $w_{\kappa}'=w_{\kappa}w_{{\kappa}+1}$ or $w_{\kappa}$, $f(w_{\kappa}')=Y_{{\kappa}-1}X_{\kappa}Y_{\kappa}s$,
$w=w_1x_1w_2...w_{{\kappa}-1}x_{{\kappa}-1}w_{\kappa}'$, and 
$\check{w}$ a (not necessarily proper) prefix of $w_1w_2...w_{{\kappa}-1}w_{\kappa}'$.


\end{enumerate}
\end{rqe}

For any positive integer $\ell$, since $|f(x_i)|=|f(x_j)|$ 
is equivalent to $|f(x_i^\ell)|=|f(x_j^\ell)|$ and since a prefix (resp. a suffix) of $f(x_i)$
is a prefix (resp. a suffix) of $f(x_i^\ell)$, we immediately obtain the following Corollary 
that will be the central point of proof of Proposition~\ref{mainresult}.


\begin{cor} \label{Reduc2} (method of reduction){\rm \cite{Wla2015a}}

Let ${\kappa} \geq 3$ and $\ell \geq 1$ be two integers,
let $\alpha$ be an integer in $\{1,2\}$
and let $\beta$ be an integer in $\{{\kappa}-1,{\kappa}\}$

Let $f$ be a morphism \fromAtoB \/ and
let $(w_i)_{i=\alpha..\beta+1}$, $(x_i)_{i=\alpha..\beta}$ be words over $A$
such that $|f(x_i)|=|f(x_j)| \neq 0$ for all integers $i,j$ in $[\alpha,\beta]$.

We denote by $w$ the word $w_{\alpha} x_{\alpha}^\ell...w_{\beta}x_{\beta}^\ell w_{\beta+1}$.

We assume that there exist $u$, $p$, $s$, $(X_i)_{i=\alpha..\beta}$ and $(Y_i)_{i=\alpha..\beta}$ 
words over $B$ such that
$f(w_i)=Y_{i-1}X_{i}$ for all integers $i$ in $[1+\alpha;\beta]$.
Furthermore, we also assume that
$f(w_\alpha)= pu^{\alpha-1}X_1$ and $f(w_{\beta+1})=Y_{{\kappa}}u^{{\kappa}-\beta}s$ 
where
$u=X_if(x_i^\ell)Y_i (\neq \varepsilon)$
for all integers $i$ in $[\alpha,\beta]$: it means that $f(w)=pu^{\kappa}s$.
 
Finally, we assume that  there exists an integer $q$ such that, for any integer $i$ in $[\alpha,\beta]$, 
$0 \leq |X_q|- |X_i| \leq |X''_q|$ where $X''_q$ is a common suffix of $X_q$ and $f(x_q)$,
$0 \leq |X_q|- |X_i| \leq |f(x_q)|$ when $\alpha=2$,
or $0 \leq |Y_i|- |Y_q| \leq |f(x_q)|$ when $\beta={\kappa}-1$.

Then, for any integer $0 \leq \phi < \ell$, the word $\Modific{w}=w_{\alpha} x_{\alpha}^\phi...w_{\beta}x_{\beta}^\phi w_{\beta+1}$
satisfies $f(\Modific{w})=p\Modific{u}^{\kappa}s$ with
$\Modific{u}=X_if(x_i^\phi)Y_i$ for any integer $i$ in $[1;{\kappa}]$.

In particular, $f(\Modific{w})$ and $\Modific{u}^{\kappa}$ are synchronised only if $f(w)$ and $u^{\kappa}$ are synchronised.

\end{cor}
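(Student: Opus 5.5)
The plan is to derive Corollary~\ref{Reduc2} from Lemma~\ref{Reduc1} (with Remark~\ref{RemReduc}), replacing each $x_i$ by $x_i^\ell$, and then to iterate. I will handle the case $\phi=\ell-1$ first. The general case $0\le \phi<\ell$ then follows by induction on $\ell-\phi$, since the hypotheses are stable under decreasing $\ell$. Under the reduction, the $X_i$, $Y_i$, $w_i$, $p$, $s$ and $q$ are unchanged; only $u$ changes.

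\textbf{Step 1: the case $\alpha=1$, $\beta=\kappa$.} Write $x_i^\ell = x_i^{\ell-1}x_i$ and regroup $w$ as $w_1 x_1^{\ell-1}\, x_1\, w_2 x_2^{\ell-1}\, x_2 \cdots$. The idea is to apply Lemma~\ref{Reduc1} with the ``removed'' words $x_i$ and the ``kept'' words $w_i' = x_{i-1}^{\ell-1}\ldots$; the bookkeeping is cleanest if we instead take the removed block to be one copy of $x_i$ inside $x_i^\ell$. Concretely, define new data as follows:
\begin{itemize}
\item $X_i^{\rm new}=X_i f(x_i^{\ell-1})$ and $Y_i^{\rm new}=Y_i$, with removed word $x_i$;
\item kept words $w_1x_1^{\ell-1}$ and $w_ix_i^{\ell-1}$ for the inner indices.
\end{itemize}
Then $|f(x_i)|$ is constant and nonzero, and $u = X_i^{\rm new} f(x_i) Y_i$. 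The needed inequality is
\[
0\le |X_q^{\rm new}|-|X_i^{\rm new}| = |X_q|-|X_i| \le |X_q''|,
\]
where $X_q''$ is a common suffix of $X_q$ and $f(x_q)$. The key observation is that we need a common suffix of $X_q^{\rm new}=X_qf(x_q^{\ell-1})$ and $f(x_q)$ of that length. If $\ell\ge2$, the word $f(x_q)$ itself is a common suffix, and $|f(x_q)|\ge|X_q''|$. If $\ell=1$, the common suffix is $X_q''$ itself. Lemma~\ref{Reduc1} then yields $f(\check w)=p\,(X_if(x_i^{\ell-1})Y_i)^\kappa s$, as claimed. This is exactly the observation preceding the statement: a suffix of $f(x_i)$ is a suffix of $f(x_i^\ell)$.

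\textbf{Step 2: $\alpha=2$ or $\beta=\kappa-1$.} Here I invoke Points~\ref{RemReducC1} and~\ref{RemReducC2} of Remark~\ref{RemReduc}. When $\alpha=2$, the first copy of $u$ lies inside $f(w_2)$ and $x_1$ is missing. The hypothesis $0\le|X_q|-|X_i|\le|f(x_q)|$ plays the role of ``$X_q$ suffix of $f(x_q^{\ell})$'' after the regrouping: the new $X_q^{\rm new}=X_qf(x_q^{\ell-1})$ is compared against $f(x_q)$, and the required overlap length is at most $|f(x_q)|$. The mirror argument treats $\beta=\kappa-1$ with the $Y$ condition. Synchronisation transfers directly via the last sentence of Lemma~\ref{Reduc1}.

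\textbf{Main obstacle.} The difficult step is the boundary cases of Step 2. There one must check that the missing block is compensated: the piece of $u$ lying in $f(w_\alpha)$ or $f(w_{\beta+1})$ must shrink consistently to $\Modific{u}$. This requires a careful application of Lemma~\ref{PcorEquas} to the equation $f(w_2)=pu X_2$, which must become $p\Modific{u}X_2$. I would verify this by a direct length count, using the condition $|X_q|-|X_i|\le|f(x_q)|$ so that the deleted factor $f(x_q)$ aligns with a period of $u$.
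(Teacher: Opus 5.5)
Your Step~1 is correct and follows the paper's route. The paper applies Lemma~\ref{Reduc1} with a power of $x_i$ as the removed block, using that equal image lengths and prefix/suffix relations pass from $f(x_i)$ to $f(x_i^m)$. Removing one copy at a time and inducting is a harmless variant when $\alpha=1$ and $\beta=\kappa$.

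The gap is Step~2, and it has two parts.

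First, your regrouping does not meet the hypothesis of Point~\ref{RemReducC1}. That point needs the new $X_q$ to be a suffix of the image of the removed block. You set $X_q^{\rm new}=X_qf(x_q^{\ell-1})$ and remove a single $x_q$, so for $\ell\ge 2$ this would require $|X_q|+(\ell-1)|f(x_q)|\le|f(x_q)|$. That fails unless $\ell=2$ and $X_q=\varepsilon$. Moreover, no length inequality such as $|X_q|-|X_i|\le|f(x_q)|$ can replace a suffix relation, which is an identity between words.

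Second, your ``main obstacle'' aims at an impossible target. When $\alpha=2$ the word $w_2$ is kept, so $f(w_2)=puX_2$ does not change. Since $|\Modific{u}|<|u|$, it can never equal $p\Modific{u}X_2$. Indeed, counting lengths, $f(\Modific{w})=p\Modific{u}^{\kappa}s$ with the same $p$ is impossible in the boundary cases. This is why Point~\ref{RemReducC1} changes the prefix and only requires $\check w$ to be a suffix; so your claim that $p$ is unchanged under the reduction is also false there.

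What must actually be shown is that $\Modific{u}$ is a suffix of $u$. Then $f(w_2)=p'\Modific{u}X_2$ with $|p'|=|p|+|f(x_q^{\ell-\phi})|$. Equivalently, $X_q$ must be a suffix of $X_qf(x_q^{\ell-\phi})$. Neither Lemma~\ref{PcorEquas} nor a length count yields this word identity. It does follow at once from the real hypothesis of Point~\ref{RemReducC1}: $X_q$ is a suffix of $f(x_q)$, hence of $f(x_q^{\ell-\phi})$.

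So the fix is to leave $X_q$ untouched. Remove the block $x_i^{\ell-\phi}$ and put the surviving copies $x_i^{\phi}$ on the $Y$ side; then the suffix hypothesis transfers verbatim, exactly as in the paper's one-line justification. Handle $\beta=\kappa-1$ symmetrically, using ``$Y_q$ prefix of $f(x_q)$'' with the surviving copies on the $X$ side.

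If both boundary cases occur at once, no single regrouping satisfies both points for $0<\phi<\ell$. In that case check directly that $\Modific{u}=X_qf(x_q^{\phi})Y_q$ is a suffix of $f(x_q^{\phi+1})Y_q$ and a prefix of $X_qf(x_q^{\phi+1})$. Since $\phi+1\le\ell$, it is then both a suffix and a prefix of $u$.
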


\section{Main result}

\begin{proposition}\label{mainresult}

For any integer $k \geq 3$,
an overlap-free morphism is a $k$-power-free morphism.

\end{proposition}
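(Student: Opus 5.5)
We argue by contradiction. Let $f$ be an overlap-free morphism from $A^*$ to $B^*$, with $f\neq\epsilon$, and let $k\geq 3$. By Lemma~\ref{LemAbifix} and Lemma~\ref{LemOverPS}, $f$ is strongly bifix, hence non-erasing, injective and a ps-morphism. Suppose there is a $k$-power-free word $w$ such that $f(w)$ contains a $k$-power. Among all such counterexamples we choose one with $|w|$ minimal; then the $k$-power $u^k$ may be taken pure and $w=\overline{w}$ is a shortest factor with $f(\overline{w})=pu^ks$, $|p|<|f(\sub{w}{1})|$ and $|s|<|f(\sub{w}{|w|})|$. The plan is to show that such a $w$ either contains a $k$-power, or can be shortened by the reduction of Corollary~\ref{Reduc2} into a shorter counterexample. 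Either conclusion contradicts the hypotheses.

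\textbf{First step: $w$ contains an overlap.} The word $u^k$ contains $u^3$, which is an overlap. Since $f$ is overlap-free, $w$ is not overlap-free. We pick a pure overlap $auaua$ in $w$. Its image $f(auaua)$ is a factor of $pu^ks$, and Remark~\ref{RemFacUk} shows that $f(auaua)$ is itself a factor of a word of the form $\pi_1 f(auaua)\sigma_2$ sitting inside the power. Next we apply Lemma~\ref{LemConjImOver} with $T=w$. Since $f$ is overlap-free, $w$ starts with a pure overlap $bvbvb$ and ends with a pure overlap $ctctc$, where $|f(bv)|=|f(ct)|=|f(au)|$. Remark~\ref{RemConjImOver} then shows that $f(bv)$ is a conjugate of $f(au)$.

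\textbf{Second step: compare $|f(au)|$ with $|u|$.} If $|f(au)|\geq |u|$, the power $f(auaua)$ sits inside $u^k$. Using Fine and Wilf (Corollary~\ref{Kera}) together with the primitivity of $u$, we get that $|f(au)|$ is a multiple of $|u|$. If that multiple is at least $2$, then $f(auau)$ contains $u^3$ while $auau$ is overlap-free, a contradiction. So $|f(au)|=|u|$. In that case $f(w)$ and $u^k$ are synchronised (with $w_0=au$, say), and Lemma~\ref{synckp} shows that $w$ contains a $k$-power, a contradiction.

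It remains to treat $|f(au)|<|u|$. Here every occurrence of the period $|u|$ in $pu^ks$ picks up a factor of $w$ whose image is a conjugate of $f(au)$ (or $f(au)$ itself). We decompose $w=w_1x_1^{\ell}w_2\cdots w_kx_k^{\ell}w_{k+1}$, where the $x_i$ are the conjugates $bv$, $ct$, etc. lying in the $k$ consecutive blocks of period $|u|$. Each $x_i^{\ell}$ with $\ell\geq 1$ comes from the overlap structure, and the square $(au)^2$ embeds in each block once $|u|>|f(au)|$. The pieces $X_i,Y_i$ are determined by the position of the block inside $u$. Remark~\ref{RemPrimFactInt} forces the offsets $|X_i|$ to differ by at most a common suffix of $f(x_q)$, which is exactly the condition required by Corollary~\ref{Reduc2}. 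Points~\ref{RemReducC1} and~\ref{RemReducC2} of Remark~\ref{RemReduc} handle the first and last blocks. The corollary with $\phi=\ell-1$ produces a strictly shorter word $w'$ with $f(w')=pu'^ks$. Lemma~\ref{LemPurekmu} is used to check that $w'$ is still $k$-power-free: deleting one copy of $au$ inside a run $(au)^{\ell}$ with $\ell\leq k-1$ cannot create a $k$-power. This contradicts the minimality of $w$.

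\textbf{Main obstacle.} The hard part is this last step. We must show that in each of the $k$ blocks the overlap occurrences line up so that the hypotheses of Corollary~\ref{Reduc2} really hold. This requires a case analysis of where the conjugates of $f(au)$ fall relative to the boundaries of $u$, with the extremal blocks being the delicate ones. We must also verify that the reduced word remains $k$-power-free. For that verification we would first try to use that $k$-powers in $w'$ would lift to $k$-powers in $w$, through the inverse operation of reinserting $au$ via Lemma~\ref{LemBif1}.
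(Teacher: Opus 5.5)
Your second step (Fine--Wilf with the primitivity of $u$ to exclude $|f(av)|\ge|u|$, then synchronisation) matches the paper's Step~1. After that, the argument is asserted rather than proved, and two of the assertions are wrong. I write the overlap as $avava$, since you use $u$ for two different words.

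\textbf{The application of Lemma~\ref{LemConjImOver}.} The lemma cannot be applied with $T=w$. Its hypothesis $f(T)=\pi_1f(avava)\sigma_2$ with $|\pi_1|<|f(T[1])|$ and $|\sigma_2|<|f(T[|T|])|$ says that $T$ is the shortest factor covering one given occurrence of $f(avava)$. The word $w$ is not such a factor, so your overlaps $bvbvb$ and $ctctc$ at the two ends of $w$ are unfounded. The lemma becomes useful only once you know that $f(avava)$ lies inside one period of $u^k$, or across one boundary between two periods. Then $f(avava)$ recurs, shifted by $j|u|$, in every block. Applying the lemma to the shortest factor $T_j$ of $w$ covering the $j$-th copy gives overlaps $b_jv_jb_jv_jb_j$ with $|f(b_jv_j)|=|f(av)|$, and $f(b_jv_j)$ or $f(v_jb_j)$ is an internal factor of $f(av)f(av)$. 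These are the $x_j$ of the reduction. The offset condition of Corollary~\ref{Reduc2} comes from this internal-factor property (Remark~\ref{RemReduc1}, with $q$ chosen extremal), not from Remark~\ref{RemPrimFactInt}.

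\textbf{Getting $f(avava)$ into one or two periods.} This takes real work that your plan omits. You must pick the overlap inside $w[2..|w|-1]$; this is the paper's Step~2. Otherwise $f(avava)$ may run into $p$ or $s$, and even your claim in the second step that it sits inside $u^k$ is unjustified. You must also pick the overlap with $|f(avava)|$ minimal and prove that $f(avava)$ is an internal factor of $u^2$ (Step~3). That proof uses the overlap-freeness of $f$, non-synchronisation and this minimality. Your claim that $f((av)^2)$ fits in each block once $|f(av)|<|u|$ is false: $|f(avav)|=2|f(av)|>|u|$ as soon as $|f(av)|>|u|/2$. The resulting straddling case is the delicate part of the proof (Step~6). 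There the paper chooses overlaps maximising $|V_1|$ and $|V_2|$ to get $|V_{2,\beta\mu}|\ge 2|f(\beta\mu)|$, and only then applies Point~\ref{RemReducC1} of Remark~\ref{RemReduc}. Nothing in your proposal covers it.

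\textbf{The closing contradiction.} Your final contradiction with minimality is not justified. Corollary~\ref{Reduc2} deletes a factor in each of the $k$ blocks, that is, at $k$ different places of $w$. Nothing makes the shorter word $k$-power-free: deletions at several places can create a power (the cube-free word $zazbzc$ becomes $zzz$). Lemma~\ref{LemPurekmu}, which only concerns $(av)^{k-1}$, says nothing about this, and your idea of lifting $k$-powers back through Lemma~\ref{LemBif1} has no mechanism behind it.

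The paper avoids the issue by minimising a different quantity. Take $W$ of minimal length such that $f(W)$ contains a \emph{non-synchronised} $k$-power; your $w$ is such a word, by Lemma~\ref{synckp}. Then use the last clause of Lemma~\ref{Reduc1} and Corollary~\ref{Reduc2}: $f(\check w)$ and $\check u^{\kappa}$ are synchronised only if $f(w)$ and $u^{\kappa}$ are. With this, any admissible reduction contradicts minimality, whether or not the reduced word is $k$-power-free. The remaining arguments only use that $f$ sends specific overlap-free words such as $avav$ or $bvb$ to overlap-free words.
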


\begin{proof}

Let $f$ be a morphism from $A^*$ to $B^*$. We assume that $f$ is not $k$-power-free
and we want to show that $f$ is not overlap-free. 

The morphism $f$ must be a strongly bifix morphism. Otherwise, $f$ is not overlap-free and it ends the proof.

Let $w$ be a shortest $k$-power-free word whose image by $f$ contains a $k$-power.
Hence, $f(w)=p u^{k} s$ for two words $p$ and $s$ and a non-empty word $u$ in $B^+$.

If $f(w)$ and $u^{k}$ are synchronised then, by Lemma~\ref{synckp}, $w$ contains a $k$-power:
a contradiction.

Now, let us assume that $f$ is a strongly bifix morphism, and that $f(w)$ and $u^{k}$ are not synchronised.
In particular, it implies that $f$ is ps-morphism and injective. 

The central point of this proof is that, starting with $w$ and $u$, we use iteratively 
reduction of Lemma~\ref{Reduc2} (that is, of Lemma~\ref{Reduc1} and including the special cases of
Remark~\ref{RemReduc}) on the word whose image 
contains a $k$-power in such a way that there is no reduction left. 
That is, no situation of the hypotheses of Lemma~\ref{Reduc2} 
can be founded after this procedure. We obtain new words 
$W$ and $U$ such that $f(W)=PU^{k}S$
with $P$ a proper prefix of $\sub{W}{1}$, 
$S$ a proper suffix of $\sub{W}{|W|}$
and $f(W)$ and $U^{k}$ are not synchronised.

Moreover, if $W \neq w$, i.e., $|W|<|w|$, then, by definition of $w$, it means that $W$ contains a
$k$-power.

We will show that either $f$ is not overlap-free, or $f(W)$ and 
$U$ can again be reduced using Lemma~\ref{Reduc2}: a contradiction.

Another way to define $W$ is to choose a word of minimal length whose image by $f$ contains
a non-synchronised $k$-power. If we can reduce $W$ then this contradicts the hypothesis of minimal
length of $W$.

We focus on the fact that $W$ necessarily contains an overlap. Indeed, 
any factor of $U^k$ of length greater than $2|U|$ contains an overlap.
So, the contrary ends the proof: $f$ is not overlap-free.
More precisely, whatever the proper conjugated word
$U_c$ of $U$, $f(W)$ contains
$U_cU_cx$ where $x$ is the first letter of $U_c$.

\setcounter{comptnivun}{1}
{\textbf{\textit{Step \thecomptnivun}}}: 
For any pure overlap $avava$ of ${W}$ with $a \in A$
and $v\in A^*$, the words ${U}^{k}$ and $f(avava)$ do not have any common factor
of length at least $|{U}|+|f(av)|$.

By contradiction, let us assume that ${U}^{k}$ and $f(avava)$ have a common 
factor of length at least $|{U}|+|f(av)|$.
By Corollary~\ref{Kera},  
there exist two words $t_1$ and $t_2$,
and two integers $r$ and $q$ such that $f(av)=(t_1t_2)^r$ and
${U}=(t_2t_1)^q$
with $t_1t_2$ and $t_2t_1$ primitive words.

If $r \geq 2$ then $f(avav)=(t_1t_2)^{2 \times r}$ with $2 \times r \geq 4$. 
Therefore, $f(avav)$ contains an overlap with $avav$ overlap-free:~
$f$ is not overlap-free.

If $r=1$ then let $v_1$ and $v_2$ be the words such that $W=v_1\, avav \, v_2$. 
Since $f(W)=p(t_2t_1)^{q \times k} s$,
by Remark~\ref{RemPrimFactInt}, there exist two integers $\ell_1$
and $\ell_2$ such that $f(v_1)=p(t_2t_1)^{\ell_1}t_2$ and $f(v_2)=t_1(t_2t_1)^{\ell_2}s$ with
$\ell_1 + \ell_2 \geq q \times k-3$.

Since $f$ is strongly bifix, using Lemma~\ref{1pref} or Lemma~\ref{1suff}, by induction, we get that 
$W$ contains $(av)^{q}$:~a contradiction with the hypothesis that
$f(W)$ and $U^{k}$ are not synchronised.

\addtocounter{comptnivun}{1}
{\textbf{\textit{Step \thecomptnivun}}}: 
$\sub{{W}}{2..|{W}|-1}$ contains an overlap and so a pure overlap.

By contradiction, let us assume that $\sub{{W}}{2..|{W}|-1}$ is overlap-free.
In particular, it implies that $W$ starts or ends with a pure overlap.
Let $s_1$ and $p_{k+1}$ be the words such that
 $f(\sub{{W}}{1})=ps_1$ and $f(\sub{{W}}{|{W}|})=p_{k+1}s$,
that is, ${U}^{k}=s_1f(\sub{{W}}{2..|{W}|-1})p_{k+1}$.

If $|s_1|+|p_{k+1}| < |U^{k-2}|$, then 
$|f(\sub{{W}}{2..|{W}|-1})| > 2|U|$ and so contains an overlap:~$f$ is not overlap-free.

If $|s_1|+|p_{k+1}| \geq |U^{k-2}| \,  (\geq |U|)$ then either $|s_1| \geq |U|/2$
or $|p_{k+1}| \geq |U|/2$.
If $|s_1| \geq |U|/2$ and $W$ starts with a pure overlap $avava$ with $a\in A$ and $v \in A^*$.
It implies that $s_1$ is a suffix of $f(a)$.
If $W=avava$ then $|s_1f(vav)p_{k+1}|\geq |U|+|f(va)|$ with $s_1f(vav)p_{k+1}$
a common factor of a power of $U$ and a power of $f(av)$:~a contradiction with Step 1.
If $W \neq avava$ then $|s_1f(vava)|\geq 2|s_1|+|f(va)| \geq |U|+|f(va)|$ with $s_1f(vava)$
a common factor of a power of $U$ and a power of $f(av)$:~a contradiction with Step 1.
In a same way, If $|p_{k+1}| \geq |U|/2$, $W$ can not end with a pure overlap.

If $|s_1| < |U|/2$ then $|p_{k+1}| > |U|/2$ and $W$ do not end with a pure overlap.
So either $\sub{{W}}{2..|{W}|-1}$ contains an overlap or $\sub{{W}}{2..|{W}|}$ is overlap-free. 
But, in this second case, $f(\sub{{W}}{2..|{W}|-1})p_{k+1}$ is factor of $U^k$ with 
$|f(\sub{{W}}{2..|{W}|-1})p_{k+1}|=k|U|-|s_1| >2|U|$.
So $f(\sub{{W}}{2..|{W}|})$ contains an overlap:~$f$ is not overlap-free.

In the same way, if $|p_{k+1}| < |U|/2$, we obtain that $\sub{{W}}{2..|{W}|-1}$ contains an overlap or 
that $f$ is not overlap-free.

\vsd

Let $POI$ be the set of pure overlaps that are factors of $\left(\sub{{W}}{2..|{W}|-1} \right)$.

The set $\left\{|f(avava)| \mid avava \in POI \right\}$ is finite and admits a minimum $L_{min}$.

Let $POM = \left\{  avava \in  POI  \mid |f(avava)| =L_{min} \right\}$.

Let us also recall that, if $avava$ is a  pure overlap, then $a\in A$.

\addtocounter{comptnivun}{1}
{\textbf{\textit{Step \thecomptnivun}}}: 
For any overlap $avava \in POM$, the word $f(avava)$ is an internal factor of ${U}^2$.

For any pure overlap $avava \in POI$, since $f(avava)$ is an internal factor of $U^k$, 
by Step~1, we have $|f(avava)| < |U|+|f(av)|$.
Since $|f(avava)| \geq 2|f(a)|$, we can not have $|f(avava)| \geq 2|U|$.

Let $W_1$ and $W_2$ be the non-empty words such that $W=W_1avavaW_2$.
By contradiction, let us assume that $f(avava)$ (with $|f(avava)| < 2|U|$)
is not an internal factor of ${U}^2$.
It means that there exist two non-empty words $v_1$ and $v_2$ such that
$f(avava)=v_1Uv_2$.
In particular, $v_1$ is a suffix of $U$, $v_2$ is a prefix of $U$ and $|v_1v_2|<|U|$.
Since $|f(ava)|<|{U}|$, we get that $|v_1v_2|=|f(av)|+|f(ava)|-|U|<|f(av)|$.

\setcounter{comptnivdeux}{1}
{\textit{Case \thecomptnivun.\thecomptnivdeux:}} $|v_1| \leq |f(a)|$ and $|v_2| \leq |f(a)|$

Let $A_1$ and $A_2$ be the words
such that $v_1$
such that $f(a)=v_1A_1=A_2v_2$ and $U=A_1f(vav)A_2$.

If $|v_1v_2| = |f(a)| $, i.e.,  $|v_1|=|A_2|$ then $|f(avav)|=|U|$:~a contradiction with the assumption that
$f(W)$ and $U^k$ are not synchronised.

In a same way, if $|v_1| = |f(a)|$ and $|v_2| = |f(a)|$ then $|f(vav)|=|U|$:~again a contradiction with the
assumption that $f(W)$ and $U^k$ are not synchronised.
It means that $A_1A_2 \neq \varepsilon$.

If $|v_1|>|A_2|$ then $|v_2|>|A_1|$. Since $v_1$ and $A_2$ are both suffixes of $U$,
let $X$ be non-empty the word such that $v_1=XA_2$.
Since $v_2$ and $A_1$ are both prefixes of $U$, let $Y$ be the non-empty word such
that $v_2=A_1Y$.
We get $Xf(a)=f(a)Y$ with $|X|=|Y|=|v_1|-|A_2|=|v_2|-|A_1| < |f(a)|$. 
By Case~\ref{Lotcase1} of Proposition~\ref{Lothaire}
and Remark~\ref{RemLotC1}, we get that $f(a)Y$, prefix of $f(ava)$, contains an overlap:~$f$ is
not overlap-free.

If $|v_1|<|A_2|$ then $|v_2|<|A_1|$. Let $X$ be the non-empty word such that $Xv_1=A_2$.
Since $v_2$ and $A_1$ are both prefixes of $U$, let $Y$ be the word such
that $A_1=v_2Y$.
We get $Xf(a)=f(a)Y$ with $|X|<|f(a)|$. By Case~\ref{Lotcase1} of Proposition~\ref{Lothaire},
there exist two words $r$ and
$s$ in $A^*$, and an integer $n$ such that $f(a)=r(sr)^{n}$, $X=rs$
and $Y=sr$. Since $|X|<|f(a)|$, we get that $n \geq 1$ when $r \neq \varepsilon$ and 
$n \geq 2$ when $r = \varepsilon$. But, if $s=\varepsilon$ or if $n \geq 2$, 
since $f(a)$ or $f(aa)$ contains an overlap, 
we get that $f$ is not overlap-free:~it ends the proof. So we may assume that $n=1$,
$r \neq \varepsilon$ and $s \neq \varepsilon$.

Let $W_1'$ the smallest suffix of $W_1$ such that $f(W_1')$ ends with $X$.
There exist a word $p_1'$ such that  $f(W_1')=p_1'X$ with $|p_1'|<|f(\sub{W_1'}{1})|$. 
It follows that  $f(W_1'a)=p_1'rsrsr$.
If $W_1'a$ is overlap-free, it ends the proof:~$f$ is not overlap-free.
Since $|f(W_1')|=|p_1'sr|$ and $|f(a)|=|rsr|$, if $W_1'a$ contains an overlap $btbtb$ then $b \neq a$
and $|f(btbtb)|<|f(avava)|$:~a contradiction the hypothesis of the minimal length of 
$f(avava)$.

\addtocounter{comptnivdeux}{1}
{\textit{Case \thecomptnivun.\thecomptnivdeux:}} $|v_1| > |f(a)|$


Since $f(a)$ is the suffix of $Uv_2$ and of $v_1v_2$ of length $|f(a)|$,
let $X$ be the word such that $v_1v_2=Xf(a)$.
And, since $f(a)$ is a prefix of $v_1$, let $Y$ be the word such that $v_1v_2=f(a)Y$.

If $|Y|=|f(a)|$ then $|f(vava)|=|U|$:~a contradiction with the assumption that
$f(W)$ and $U^k$ are not synchronised.

If $|Y|<|f(a)|$, by Case~\ref{Lotcase1} of Proposition~\ref{Lothaire}
and Remark~\ref{RemLotC1}, we get that $f(a)Y$, prefix of $f(ava)$, contains an overlap:~$f$ is
not overlap-free.

If $|Y|>|f(a)|$, let $Z$ be the non-empty word such that $Y=Zf(a)$ and $X=f(a)Z$.
Since $|f(av)|>|v_1v_2|$, we get that  $v \neq \varepsilon$, $Y$ is a prefix of $f(v)$ and 
$X$ is a suffix of $f(v)$. It follows that $f(vav)$ contains the overlap $f(a)Zf(a)Zf(a)$:~$f$ is not overlap-free.

\addtocounter{comptnivdeux}{1}
{\textit{Case \thecomptnivun.\thecomptnivdeux:}} $|v_2| > |f(a)|$

It is the mirror image of Case \thecomptnivun.2.

\addtocounter{comptnivun}{1}
{\textbf{\textit{Step \thecomptnivun}}}: Decomposition of $U^k$.

For every integer $j$ in $[1,k+1]$, 
let $i_j$ be the smallest integer such that $PU^{j-1}$ is a prefix
of $f(\sub{W}{1..i_j})$.
We have $i_1=1$ and 
there exist some words $p_j$ and $s_j$ such that
$f(\sub{W}{i_j})=p_js_j$, $p_1=P$,
$s_{k+1}=S$,
$p_j \neq \varepsilon$ if $j \neq 1$,
and $s_1 \neq \varepsilon$.

Let $avava$ be an overlap and let us assume that $avava \in POM$.
If $i_j=i_\ell$ with $j < \ell$  then, by definition of $i_j$ and $i_\ell$, it means
that $f(\sub{W}{i_j})$ contains $U^{\ell-j}$.
If $f(avava)$ is factor of $U$ that is of $f(\sub{W}{i_j})$ then $f$ is not overlap-free.
If $f(avava)$ is factor of $U^2$ (but not of $U$) then 
we must have $j+1=\ell$ (and $i_{j+1}\neq i_{j+2}$) otherwise
$f$ is not overlap-free (and it ends the proof).
It follows that $f(avava)$ is a factor of $f(\sub{W}{i_j..i_{j+2}})$.
Let $\alpha \geq i_{j}$ be the greatest integer and 
$\beta \leq i_{j+2}$ be the lowest integer such that $f(avava)$ is a factor of $f(\sub{W}{\alpha..\beta})$
that is $f(\sub{W}{\alpha..\beta})=\pi_1f(auaua)\sigma_2$
with $|\pi_1|<|f(\sub{W}{\alpha})|$ and $|\sigma_2|<|f(\sub{W}{\beta})|$. 
By Lemma~\ref{LemConjImOver}, either $f$ is not overlap-free (and it ends the proof) or
$\sub{W}{\alpha..\beta}$ starts with an overlap. 
In this second case,  since $f(avava)$ is not factor of $U$, we get that $\alpha=i_j$.
It follows that $\sub{W}{i_j..\beta}$ contains at least three occurences of $\sub{W}{i_j}$
that is $|f(\sub{W}{i_j..\beta})| \geq 3|U|$: a contradiction with $i_j=i_{j+1}$ and $\beta \leq i_{j+2}$.

So we get that $i_j \neq i_\ell$ when $j \neq \ell$.
It follows $f(\sub{\omega}{1..i_{j}})=pU^{j-1}s_{j}$ 
for every integer $j$ in $[1,k+1]$ and
$U=s_jf(\sub{\omega}{i_j+1..i_{j+1}-1})p_{j+1}$
for every integer $j$ in $[1,k]$.

\addtocounter{comptnivun}{1}
{\textbf{\textit{Step \thecomptnivun}}}: Case where there exists an overlap $avava \in POM$
such that $f(avava)$ is an internal factor of $U$.

It means that, for every integer $j$ in $[1,k]$, 
$f(avava)$ is an internal factor of $f(\sub{\omega}{i_j..i_{j+1}})$.
Let $T_j$ be the shortest factor of $f(\sub{\omega}{i_j..i_{j+1}})$ that contains $f(avava)$.

By Lemma~\ref{LemConjImOver} and Remark~\ref{RemConjImOver}, 
there exist a letter $b_j$ and a word $v_j$ such that $T_j$ starts with $b_jv_jb_jv_jb_j$
with $f(v_jb_j)$  an internal factor
of $f(au)f(au)$ and $|f(b_jv_j)|=|f(v_jb_j)|=|f(au)|$.
By Lemma~\ref{Reduc1} and Remark~\ref{RemReduc1}, a reduction can be done:~a contradiction
with the hypotheses.

\addtocounter{comptnivun}{1}
{\textbf{\textit{Step \thecomptnivun}}}: Case where, for any overlap  $avava \in POM$,
$f(avava)$ is an internal factor of ${U}^2$ but not of $U$.

For any $avava \in POM$, let $V_{1,av}$ and 
$V_{2,av}$ be the words such that $f(avava)=V_{1,av}V_{2,av}$
with $V_{1,av}$ a suffix of $U$ and $V_{2,av}$ a prefix of $U$.
In particular, we have either $|f(av)|<|V_{1,av}|$ or $|f(av)|<|V_{2,av}|$.
For any integer  $j$ in $[1,k-1]$, 
$f(avava)$ is an internal factor of $f(\sub{\omega}{i_j..i_{j+2}})$.
Let $T_j$ be the shortest factor of $f(\sub{\omega}{i_j..i_{j+2}})$ that contains $f(avava)$.

Let $\alpha \nu \alpha \nu \alpha \in POM$ such that 
$|V_{1,\alpha \nu}|= \max \{|V_{1,av}| \mid avava \in POM\}$ 
and 
let $\beta \mu \beta \mu \beta \in POM$ such that 
$|V_{2,\beta \mu}|= \max \{|V_{2,av}| \mid avava \in POM\}$.

Let $j_\alpha$ be an integer such that 
$\alpha \nu \alpha \nu \alpha \in \factors{\sub{\omega}{i_{j_\alpha}..i_{j_\alpha+2}}}$.

For any integer  $j$ in $[1,k-1]$, 
let $T_j$ be the shortest factor of $f(\sub{\omega}{i_j..i_{j+2}})$ that contains $f(\alpha \nu \alpha \nu \alpha)$.
Since $f(W)$ and $U^k$ are not synchronised, there exists an integer $r \in \{j_\alpha-1,j_\alpha+1\}$ 
such that $T_r \neq \alpha \nu \alpha \nu \alpha$.
By Lemma~\ref{LemConjImOver} and Remark~\ref{RemConjImOver}, 
there exist a letter $b_r$ and a word $v_r$ such that $T_r$ starts with $b_rv_rb_rv_rb_r$ with
$|f(b_rv_r)|=|f(\alpha \nu)|$.
Since $|V_{1,b_r v_r}|<|V_{1,\alpha \nu}|$, it implies that $b_r =\sub{W}{i_r}$ and $V_{1,b_r v_r}=p_r$.
Since $f(b_rv_rb_rv_rb_r)$ is an internal factor of $U^2$,
it follows that $|V_{1,\alpha \nu}| > 2|f(v_r b_r )| = 2|f(\alpha \nu)|$.

Moreover, we have $|V_{2,\beta \mu}| \geq |V_{2,b_r v_r}| = |s_r| + 2|f(v_r b_r )|   \geq 2|f(\beta \mu)|$.





By Lemma~\ref{LemConjImOver} and Remark~\ref{RemConjImOver}, 
for any integer  $j \in [1,k-1]$, there exist a letter $b_j$ and a word $v_j$ such that 
$T_j$ starts with $b_jv_jb_jv_jb_j$
with $f(b_jv_j)$  an internal factor
of $f(\beta \mu)f(\beta \mu)$ and $|f(b_jv_j)|=|f(\beta \mu)|$.

Since $|V_{2,\beta \mu}| \geq 2|f(\beta \mu)|$, for any integer  $j \in [1,k-1]$, there exist a word 
$X_j$ such that $V_{2,b_j v_j}$ starts with $f(v_jb_j)$ or with $f(b_jv_j)$ and $|X_j|<|f(b_jv_j)|=|f(v_jb_j)|$.
Let $q$ be the integer such that $|X_q|=\max \{|X_i| \mid j \in [1,k-1]\}$.
By Lemma~\ref{Reduc1}, Point~\ref{RemReducC1} of Remark~\ref{RemReduc} and Remark~\ref{RemReduc1}, 
a reduction can be done:~a contradiction with the hypotheses.
\end{proof}

\bibliography{bf.bib}

\end{document}